\begin{document}

\title{Optimal Thresholds for Coverage and Rate in FFR Schemes for Planned Cellular Networks}
\author{Suman Kumar \hspace*{.5in} Sheetal Kalyani \hspace*{.5in} K. Giridhar  \\
\hspace{0in} Dept. of Electrical Engineering \\
      \hspace{0in}IIT Madras \\
  \hspace{-0.7in}     Chennai 600036, India   \\ 
{\tt \{ee10d040,skalyani,giri\}@ee.iitm.ac.in}\\
}
\maketitle
\begin{abstract}
Fractional frequency reuse (FFR) is an  inter-cell interference coordination  scheme  that is being actively researched for emerging wireless cellular networks. In this work, we consider hexagonal tessellation based planned FFR deployments, and derive expressions for the coverage probability and normalized average rate for the downlink. In particular, given reuse $\frac{1}{3}$ (FR$3$ ) and reuse $1$ (FR$1$) regions, and a Signal-to-Interference-plus-noise-Ratio (SINR) threshold $S_{th}$ which decides the user assignment to either the FR$1$ or FR$3$ regions, we theoretically show that: $(i)$ The optimal choice of $S_{th}$ which maximizes the coverage probability is $S_{th} = T$, where $T$ is the required target SINR (for ensuring coverage), and  $(ii)$ The optimal choice of $S_{th}$ which maximizes  the normalized  average rate is given by the expression $S_{th}=\max(T, T')$, where $T'$ is a function of the path loss exponent and the fade parameters. For the optimal choice of $S_{th}$, we show that FFR gives a higher rate than FR$1$ and a better coverage probability than FR$3$.  The impact of frequency correlation over the sub-bands allocated to the FR$1$ and FR$3$ regions is analysed, and it is shown that correlation decreases the average rate of the FFR network. Numerical results are provided, and  these  match with the analytical results.
\end{abstract}
\section{Introduction}
Orthogonal Frequency Division Multiple Access (OFDMA) based system ensures orthogonality among the intra-cell users, but  reuse one OFDMA system deployments suffer from inter-cell interference. Inter-cell interference coordination (ICIC)  schemes  minimize co-channel interference, and simultaneously maximize spatial reuse \cite{4907410} . Fractional frequency reuse (FFR) \cite{5534591} is a simple ICIC scheme, and has been proposed for OFDMA based wireless networks which are based on standards such as IEEE WiMAX \cite{4644118} and 3GPP LTE  \cite{4907406}.

FFR is a combination of frequency reuse $1$ (FR$1$) and frequency reuse $\frac{1}{\delta}$ (FR$\delta$). FR$1$ allocates all the frequencies to each cell, leading to a high spatial reuse, but could result in a low coverage due to inter-cell interference. On the other hand, FR$\delta$ allocates $\frac{1}{\delta}$ of  the frequencies to each cell, and   trades-off spatial reuse and rate, for  higher coverage. FFR exploits the advantage of both FR$1$ and FR$\delta$ by using FR$1$ for the cell-centre users (i.e., for those users who experience less interference from other cells and/or are close to their serving base station (BS)) and FR$\delta$ for the cell-edge users (i.e., for those users who experience high interference from 
 co-channel signals from neighbouring cells and/or are far from their serving BS). Typically, there are two modes of FFR deployment: static and dynamic FFR \cite{4907410}. In this paper, we consider the more practical static FFR scheme, where  all the parameters are configured and kept fixed over a certain period of time.  Fig. \ref{fig:ffr} depicts a frequency allocation in the FFR scheme for three adjacent cells, where $F_1$, $F_2$ and $F_3$ each use $x\%$ of the total spectrum, and $F_0$ uses $(100-3x)\%$ of the spectrum.

FFR schemes have been rather well studied using both system level simulations  and theoretical analysis \cite{6047548, 6399168, 4678075, 6214087, 765583, 4015740, 4525942, 5202320}. Blocking probability of reuse partitioning based cellular system for voice traffic,  assuming  a TDMA system has been derived in \cite{765583}. Analysis of the theoretical capacity and outage rate of an OFDMA cellular system assuming FFR and  proportional fair scheduling  has been presented in \cite{4525942} where the users are classified as cell-centre users and cell-edge users based on the geographical location. The optimization of design parameter (distance threshold\footnote{Based on pre-determined distance from the BS, users are divided into cell-centre users and cell-edge users and here the design parameter is a distance threshold ($R_{th}$).} or SINR threshold\footnote{Based on pre-determined SINR, users are divided into cell-centre users and cell-edge users here the design parameter is a SINR threshold ($S_{th}$).}) of FFR has been studied using graph theory in \cite{5198612}, and convex optimization in \cite{4657213}.  It has been shown in \cite{4657213} that the optimal frequency reuse is FR$3$ for the cell-edge users. The distance and SINR threshold were obtained based on the maximization of the ratio of the average SINR to the variance of SINR in \cite{5202320}, but small scale fading was not considered in their analysis. The optimal size of centre and edge region based on the maximization of throughput is given in \cite{6214087}, and optimization of the size of an  inner radius based on cell-edge efficiency is given in \cite{6399168}.  Recently, the average cell throughput in FFR system is derived in \cite{6225396} as a function of the distance threshold.  In \cite{5493830}, an optimal sub-band allocation for the generalized FFR in OFDMA based networks with irregularly shaped cells is presented. It is shown in \cite{novlan2010comparison}  that there exists  an optimal radius threshold for which the average rate is maximum. An analytical framework to calculate coverage probability and average rate in FFR scheme has been presented in  \cite{6047548}  using Poisson point process (PPP), where they classify users as cell-centre or cell-edge users based on a SINR threshold.

In our understanding, rather than using a SINR threshold as in \cite{6047548}, most of the existing analytical approaches derive coverage probability and rate based on a distance threshold  for user classification as either a cell-centre or cell-edge user. However, in the presence of short term fading and path loss a  significant number of geographically cell-centre users could be declared as cell-edge users  (and vice-versa), when  SINR threshold is used for user classification.  For example in Fig. \ref{fig:fig5} even  at $500$m, $20\%$ to $30\%$ users are declared to be cell-edge users based on their SINR, while based on a simple distance threshold they would have been declared as cell-centre users.  Further, to the best of our knowledge, no prior work has analytically derived the optimal SINR threshold corresponding to the coverage probability and rate in terms of received SINR\footnote{Prior work have obtained the optimal distance threshold ($R_{th}$) corresponding to rate and it has been stated that the optimal SINR threshold also correspond to the optimal distance threshold \cite{6225396}. However in the presence of short term fading  optimal $S_{th}$ need not  correspond to optimal $R_{th}$ as is apparent from Fig.\ref{fig:fig5}.}. While \cite{6047548} considers both coverage probability and rate, the network in \cite{6047548}  assumes an unplanned FFR network, where the cells using the same set of $\frac{1}{\delta}$ frequency are randomly located. Hence, two cells using the same frequency for the cell-edge users could be co-located in  \cite{6047548}. However, with FFR based deployments, the regions using the same frequency are typically planned to be as far apart as possible. Our work assumes such a planned FFR pattern which is more realistic. We consider hexagonal tessellation, with base-stations at the cell-centre and uniform user distribution, and we do not use or depend on the PPP model as in \cite{6047548}.

This work also carefully examines the correlation over the sub-bands (i.e., the frequency resources allocated to $F_0$, $F_1$, $F_2$ and $F_3$ in Fig.\ref{fig:ffr}) used in the FFR system. All prior work on FFR have assumed the sub-bands experience independent fading, which is mathematically convenient, but practically not realisable. Indeed, when we consider block  modulation such as OFDM, the channel delay spread is assumed to be finite in duration and confined to within the cyclic prefix of the OFDM symbol. Such a time limited (typically less than $20\%$ of the useful OFDM symbol duration) impulse response will introduce correlation over the frequency domain. Unless the sub-bands are diversity spaced (i.e., spaced apart in Hz by more than the reciprocal of the delay spread), correlation will exist. Since the delay spread seen on the downlink is user dependent, it is virtually impossible to ensure that the sub-bands used to define the $F_i$s in Fig.\ref{fig:ffr}, are uncorrelated for each user scheduled on the downlink. Therefore, in our analysis we specifically take into account the correlation across the sub-bands while deriving the optimal SINR threshold. Expressions for coverage probability and normalized  average rate in planned FFR networks are  derived and  the following new results are presented:
 \begin{enumerate}[(a)]
\item The optimal SINR threshold  that  maximizes the  coverage probability for FFR  is derived for a given $T$. We show that the optimal $S_{th}$ (denoted by $S_{opt,C}$) is $S_{th} = T$, and if one choose the SINR threshold  to be $S_{opt,C}$, then   the  coverage probability for FFR  is higher than that for FR$3$. The increase in FFR coverage probability over the FR$3$ coverage probability is due to sub-band diversity  gain which is accrued by the system when a user get classified as either a cell-centre or a cell-edge user.
\item The optimal SINR threshold  that  maximizes the normalized average rate for FFR is derived for a given $T$. We show that the optimal $S_{th}$ (denoted by $S_{opt,R}$) is equal to $\max(T,T')$, where $T'$ is a fixed SINR value, which depends on the system parameters such as path loss factor, fading parameters, etc. Corresponding to the $S_{opt,R}$, the normalized average rate for FFR is higher than the rates provided by either  FR$1$ or FR$3$. 
 \end{enumerate}
Numerical results are also provided along with our analytical results, and both are seen to be in close agreement.
  \section{System Model}
A homogeneous macrocell network with  hexagonal tessellation  with inter cell site distance $2R$ is considered as shown in Fig. \ref{fig:hexagonal}. The SINR $\eta(r)$ of a user located at $r$ meters from its serving BS is given by
\begin{equation} 
\eta(r)=\frac{gr^{-\alpha}}{\frac{\sigma^2}{P} + I},\text{ } \text{ }\text{ }\text{ }I=\underset{i \in \psi}\sum   h_i d_i^{-\alpha}. \label{eq:int}
\end{equation}
where the transmit power of a BS is denoted by $P$ and $\psi$ denotes the set of all interfering BSs. A standard path loss model $\|x\|^{-\alpha}$ is assumed, where $\alpha\geq 2$ is path loss exponent, and $\|x\|$ is the distance of a user from the BS similar to \cite{6047548}. Note that it is assumed that users are at least a distance $d$ away from the BS\footnote{Typically, the path loss model is assumed to be $\max\{d,\|x\|\}^{-\alpha}$.}. Noise power is denoted by $\sigma^2$. Here, $r$ and $d_i$ are the distances from the user to the serving BS and $i^{th}$ interfering  BS, respectively, and $g$ and $\{h_i\}$ denote the corresponding channel fading power. Here, $\{h_i\}$ denote the set of $h_i$s $\forall$ $i\in\psi$. These $g$ and $\{h_i\}$  are independent and identically exponentially distributed (i.i.d.) with unit mean, i.e, $g\sim \exp(1)$ and $h_i\sim \exp(1)\forall i$. 
\begin{figure}[ht]
\centering
\begin{tikzpicture}
\node[pattern=north west lines, pattern color=red!40, regular polygon, regular polygon sides=6,minimum width= 2  cm, draw] at (5,5) {};
\node[pattern=north west lines, pattern color=red!40, regular polygon, regular polygon sides=6,minimum width=2 cm, draw] at (5+1.5*1,5-0.866*1) {};
\node[pattern=north west lines, pattern color=red!40, regular polygon, regular polygon sides=6,minimum width=2 cm, draw] at (5+1.5*1,5+0.866*1) {};
\draw [ fill=white!10,line width=.01cm](5,5) circle (0.45);
\draw [fill=white!10,line width=.01cm](5+1.5*1,5-0.866*1) circle (0.45);
\draw [fill=white!10,line width=.01cm](5+1.5*1,5+0.866*1) circle (0.45);
\node at (5,5){$F_0$};
\node at (5,5.65){$F_1$};
\node at (5+1.5*1,5-0.866*1){$F_0$};
\node at (5+1.5*1,5-0.866*1+0.65){$F_2$};
\node at (5+1.5*1,5+0.866*1){$F_0$};
\node at (5+1.5*1,5+0.866*1+0.65){$F_3$};
\draw [black,line width=.02cm](8,5) rectangle (9,6);
\draw [black,line width=.02cm](8,4) rectangle (9,5);
\draw[pattern=north west lines, pattern color=red!40](8,4) rectangle (9,5);
\node at (10,5.5){Cell-centre};
\node at (10,4.5){Cell-edge};
\node at (7,7.5){FFR};
\end{tikzpicture}
\caption{Frequency allocation in FFR for three neighbouring cells with $\delta=3$. The cell-centre users of all the cells use a common frequency band $F_0$, while the cell-edge users of the three cells use different frequency bands, namely $F_1$, $F_2$ and $F_3$.}
  \label{fig:ffr}
        \end{figure}
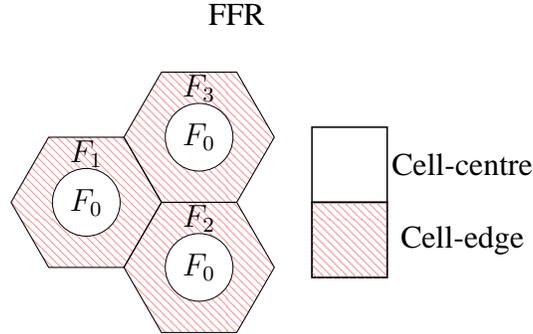
 \begin{figure}[ht]
 \centering
 \includegraphics[scale=0.3]{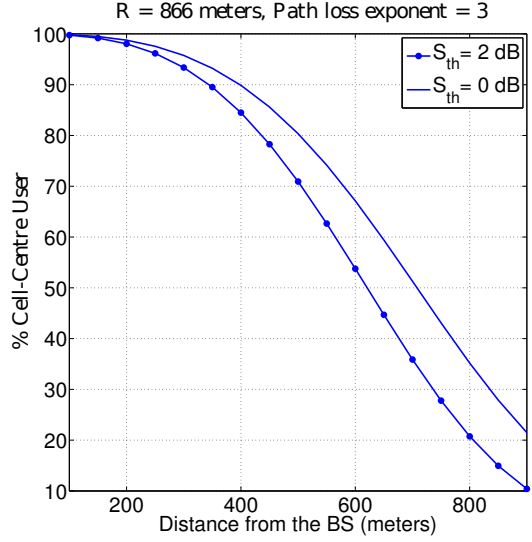}
 \caption{Percentage of  cell-centre users based on the SINR threshold ($S_{th}$) with respect to distance from the BS.}
 \label{fig:fig5}
 \end{figure}
\begin{figure}[ht]
 \centering
 \includegraphics[scale=0.3]{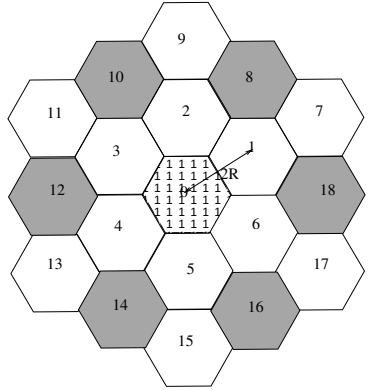}
 \caption{Hexagonal structure of 2-tier macrocell. Interference for $0$th cell in FR$1$ system is contributed form  all the neighbouring $18$ cells, while in a  FR$3$ system it is contributed only from the shaded cells.}
 \label{fig:hexagonal}
 \end{figure} 

 Similar to  \cite{6047548}, the users are classified  as cell-centre users and cell-edge users based on the received SINR at the mobile station. If the calculated SINR for a user is less than the specified  SINR threshold $S_{th}$,  the user is classified as a cell-edge user.  Otherwise,  the user is  classified as a cell-centre user. Typically, FFR  divides the whole frequency into a total of $1+\delta$ parts where one part is common to all the cells for the cell-centre users. One among the $\{1,\cdots,\delta \} $ parts is assigned to cell-edge users in each cell in a planned fashion. Due to physical movement, it is possible that a cell-centre user can get re-assigned as a cell-edge user (or vice versa). In such a case, the channel gain corresponding to the desired signal and the interfering signal change to $\hat{g}$ and $\{\hat{h}_i\}$. Now based on the coherence bandwidth of the OFDM system, and the bands associated with $F_0$ to $F_3$ as shown in Fig.\ref{fig:ffr}, it is possible that $\hat{g}$ and $\{\hat{h}_i\}$ are either correlated or uncorrelated with $g$ and $\{h_i\}$, respectively. This would depend on the particular user's channel conditions and the instantaneous coherence bandwidth with respect to the FFR frequency bands. To better understand the impact of correlation among the sub-bands on FFR performance, in this paper, we consider following two extreme cases:

{\em Case $1$:} $g$ and $\hat{g}$ are uncorrelated and also  $h_i$ and $\hat{h}_i$, are uncorrelated $\forall i$.

{\em Case $2$:} $g$ and $\hat{g}$ are fully correlated and also  $h_i$ and $\hat{h}_i$, are fully correlated $\forall i$.

In reality these channel power may be partially correlated, but the analysis of partial (arbitrary) correlation  is quite complicated and beyond the scope of this work. However, the analysis of the above two extreme cases we believe, is sufficient to understand the impact of correlation among sub-bands.

\section{Coverage Probability Analysis of FFR }
Coverage probability  is defined as the probability that a randomly chosen user's instantaneous SINR $\eta(r)$ is greater than  $T$ and is given by
\begin{equation}
 CP(T,r)= P[\eta(r)>T].
\end{equation}
The coverage probability of a user who is at a distance $r$ meters from the BS in a FR$1$ network is given by
\begin{equation}
CP_1(T,r)=P[\eta(r)>T]= P\left[g>T r^{\alpha}I+T r^{\alpha}\frac{\sigma^2}{P}\right] \label{eq:fr1},
\end{equation}
where $I$ is defined in \eqref{eq:int}. Since $g\sim \exp(1)$, $h_i\sim \exp(1)$, and $h_i$ are i.i.d., $CP_1(T,r)$ is given by
\begin{equation} 
CP_1(T,r)=E_{h_i}\left[e^{-T r^\alpha I-T r^{\alpha}\frac{\sigma^2}{P}}\right] = \underset{i \in \psi}\prod E_{h_i}\Big[e^{-T  r^\alpha  h_i d_i^{-\alpha} }\Big]e^{-T r^{\alpha}\frac{\sigma^2}{P}}=\underset{i \in \psi}\prod \frac{1}{1+T  r^\alpha  d_i^{-\alpha} }e^{-T r^{\alpha}\frac{\sigma^2}{P}} \label{eq:reuse1},
\end{equation}
where $\psi$ is the set of interfering BS in a FR$1$ network. Similarly, the coverage probability of a user located at distance $r$ meters from the BS in a FR$3$ network can be given by
\begin{equation}
CP_3(T,r) =\underset{i \in \phi}\prod \frac{1}{1+T  r^\alpha  d_i^{-\alpha} }e^{-T r^{\alpha}\frac{\sigma^2}{P}}
\label{eq:reuse3}
 \end{equation}
where $\phi$ the set of interfering cells\footnote{Note that the effect of interferers beyond tier $2$ is accounted for in the noise variance ($\sigma^2$) term.} for FR$3$ scheme is a function of the reuse plan. In the two tier planned macrocell network shown in the Fig. \ref{fig:hexagonal}, $\phi=\{8, 10, 12, 14, 16, 18\}$. Now, we derive the coverage probability for FFR for both the extreme cases of correlation mentioned earlier.
\subsection{ Case $1 :$  $g$ and $\hat{g}$ are uncorrelated  and $h_i$ and $\hat{h}_i$ are also uncorrelated $\forall i$}
The coverage probability of cell-centre user who is at distance $r$ meters from the $0^{th}$ BS in a FFR network $ CP_{F,c}(r)$ is given by  
 \begin{equation*}
 CP_{F,c}(r)\stackrel{(a)}=P[\eta(r)>T|\eta(r)>S_{th}]=P\left[\frac{g r^{-\alpha}}{I+\frac{\sigma^2}{P}}>T\Big{|}\frac{g r^{-\alpha}}{I+\frac{\sigma^2}{P}}>S_{th}\right],
 \end{equation*}
 where, $(a)$ follows from the fact that for a cell-centre user  SINR $\geq S_{th}$. Applying Bayes' rule, one can rewrite $CP_{F,c}(r)$ as
 \begin{equation}
CP_{F,c}(r)= \frac{P\Big[\frac{g r^{-\alpha}}{I+\frac{\sigma^2}{P}}>T,\frac{g r^{-\alpha}}{I+\frac{\sigma^2}{P}}>S_{th}\Big]}{P\Big[\frac{g r^{-\alpha}}{I+\frac{\sigma^2}{P}}>S_{th}\Big]}=\frac{\underset{i \in \psi}\prod\frac{1}{1+\max\{T, S_{th}\}r^\alpha d_i^{-\alpha}}e^{-\max\{T, S_{th}\} r^{\alpha}\frac{\sigma^2}{P}}}{\underset{j \in \psi}\prod\frac{1}{1+S_{th}r^\alpha d_j^{-\alpha}}e^{-S_{th} r^{\alpha}\frac{\sigma^2}{P}}}.\label{eq:centre}
\end{equation}
Similarly, the coverage probability of a  cell-edge user who is at a distance of $r$ meters from the BS in FFR network $CP_{F,e}(r)$ is given by
\begin{equation*}
CP_{F,e}(r)=P[\hat{\eta}(r)>T|\eta(r)<S_{th}]=\frac{P\Big[\frac{\hat{g} r^{-\alpha}}{\hat{I}+\frac{\sigma^2}{P}}>T,\frac{g r^{-\alpha}}{I+\frac{\sigma^2}{P}}<S_{th}\Big]}{P\Big[\frac{g r^{-\alpha}}{I+\frac{\sigma^2}{P}}<S_{th}\Big]} \label{eq:ffr_edge}.
\end{equation*}
Here, the cell-edge user will experience a new interference $\hat{I}=\underset{i \in \psi}\sum   \hat{h}_i d_i^{-\alpha}$ and new channel power $\hat{g}$, i.e, a new SINR $\hat{\eta}(r)$ due to  the fact that  the cell-edge user is now a  FR$3$ user. Basically, $\hat{\eta}(r)$ denotes the SINR experienced by the user at a distance of $r$ meters from the BS in a FR$3$ system. Since $g$ and $\hat{g}$ are i.i.d and  $h_i$ and $\hat{h}_i$ are also assumed to be i.i.d and hence $CP_{F,e}(r)$ can be simplified as

\begin{equation}
CP_{F,e}(r)=P\left[\frac{\hat{g} r^{-\alpha}}{\hat{I}+\frac{\sigma^2}{P}}>T\right]=CP_3(T,r) \label{eq:ffr_edge1}.
\end{equation}
\newtheorem{theorem}{Lemma}
We now derive the coverage probability of an user in the FFR network. This $CP_f(r)$ can be written as
\begin{equation}
CP_f(r)=CP_{f,c}(r)P[\eta(r)>S_{th}]+CP_{f,e}(r)P[\eta(r)<S_{th}]\label{eq:ffr}
\end{equation} 
Here, the first term denotes the coverage probability contributed by cell-centre users and the second term denotes the contribution from the cell-edge users. The above expression can be simplified as
\begin{equation}
\textstyle
CP_{F}(r)= \underset{i \in \psi}\prod\frac{1}{1+\max\{T, S_{th}\}r^\alpha d_i^{-\alpha}}e^{-\max\{T, S_{th}\} r^{\alpha}\frac{\sigma^2}{P}}+CP_3(T,r)-CP_3(T,r)CP_1(S_{th},r)\label{eq:edge5},
\end{equation}
by using the expression in \eqref{eq:centre} for $CP_{f,c}(r)$ and the expression in \eqref{eq:ffr_edge1} for $CP_{f,e}(r)$.
\begin{theorem}
The optimum $S_{th}$ (denoted by $S_{opt,C}$) that maximizes the   FFR coverage probability is $S_{th}= T$, and when  SINR threshold is taken  to be $S_{opt,c}$, the coverage probability of FFR achieves  higher coverage  than that of FR$3$.
\end{theorem}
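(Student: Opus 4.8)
The plan is to read off the answer from the structure of \eqref{eq:edge5}. Observe that its first term is exactly $CP_1(\max\{T,S_{th}\},r)$ in the notation of \eqref{eq:reuse1}, so that
\[
CP_F(r) = CP_1(\max\{T,S_{th}\},r) + CP_3(T,r) - CP_3(T,r)\,CP_1(S_{th},r).
\]
The only auxiliary fact needed is the elementary observation that $x \mapsto CP_1(x,r)$ is strictly decreasing on $(0,\infty)$ with $0 < CP_1(x,r) < 1$, which is immediate from \eqref{eq:reuse1} since each factor $\tfrac{1}{1+x r^\alpha d_i^{-\alpha}}$ and the exponential $e^{-x r^\alpha \sigma^2/P}$ is strictly decreasing in $x$ and lies in $(0,1)$; similarly $0 < CP_3(T,r) < 1$ from \eqref{eq:reuse3}.

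First I would split the optimization over $S_{th} > 0$ into the two regimes cut out by the $\max$. For $S_{th} \le T$ we have $\max\{T,S_{th}\} = T$, so $CP_F(r) = CP_1(T,r) + CP_3(T,r) - CP_3(T,r)\,CP_1(S_{th},r)$; only the last summand depends on $S_{th}$, and since $CP_1(S_{th},r)$ strictly decreases in $S_{th}$ while $CP_3(T,r) > 0$, the expression is strictly increasing in $S_{th}$ and hence maximized over this regime at $S_{th} = T$. For $S_{th} \ge T$ we have $\max\{T,S_{th}\} = S_{th}$, so $CP_F(r) = CP_3(T,r) + CP_1(S_{th},r)\bigl(1 - CP_3(T,r)\bigr)$; since $1 - CP_3(T,r) > 0$ and $CP_1(S_{th},r)$ strictly decreases in $S_{th}$, the expression is strictly decreasing in $S_{th}$ and hence maximized over this regime at $S_{th} = T$.

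The two pieces agree at the junction $S_{th} = T$ with common value
\[
CP_F(r)\big|_{S_{th}=T} = CP_3(T,r) + CP_1(T,r)\bigl(1 - CP_3(T,r)\bigr),
\]
so this is the global maximum of $CP_F(r)$ over $S_{th} > 0$; since the maximizing value $S_{opt,C} = T$ does not depend on $r$, it simultaneously maximizes the coverage probability after averaging over the user location. Finally, because $CP_1(T,r) > 0$ and $1 - CP_3(T,r) > 0$, the displayed value strictly exceeds $CP_3(T,r)$, which is precisely the claimed gain of FFR over FR$3$.

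I do not anticipate a real obstacle: essentially all the work is in recognizing that the first term of \eqref{eq:edge5} equals $CP_1(\max\{T,S_{th}\},r)$ and then reading off monotonicity in each regime. The one point requiring a little care is verifying that the two piecewise expressions coincide at $S_{th} = T$, so that the two one-sided statements genuinely combine into a global (rather than merely one-sided local) maximum; this is why I would write the common value out explicitly before comparing with $CP_3(T,r)$.
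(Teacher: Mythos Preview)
Your argument is correct and follows essentially the same route as the paper: both recognize the first term of \eqref{eq:edge5} as $CP_1(\max\{T,S_{th}\},r)$, split according to the sign of $S_{th}-T$, and use the strict monotonicity of $CP_1(\cdot,r)$ to locate the maximum at $S_{th}=T$, then read off the strict inequality over $CP_3(T,r)$ from \eqref{eq:edge91}. Your two-regime presentation is slightly cleaner than the paper's three-case comparison, but the content is identical.
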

\begin{proof}
To obtain the $S_{opt,C}$, we consider the following three possibilities: $(i)$ $S_{th}<T$, $(ii)$ $S_{th}=T$, $(iii)$ $S_{th}>T$.\\
$(i)$ $S_{th}<T$:$ \text{ Let }S_{th}=T-\Delta, $ where $\Delta>0$, then $CP_f(r)$ in terms of $T$ can be given by
\begin{equation}
\textstyle
CP_{F}(r, S_{th}< T)= \underset{i \in \psi}\prod\frac{1}{1+Tr^\alpha d_i^{-\alpha}}e^{-T r^{\alpha}\frac{\sigma^2}{P}}+CP_3(T,r)-CP_3(T,r)CP_1(T-\Delta,r)\label{eq:edge8}.
\end{equation}
$(ii)$ $S_{th}=T$: In this case $CP_f(r)$ in terms of $T$  can be given by
\begin{eqnarray}
CP_{F}(r, S_{th}= T)&&= \underset{i \in \psi}\prod\frac{1}{1+Tr^\alpha d_i^{-\alpha}}e^{-T r^{\alpha}\frac{\sigma^2}{P}}+CP_3(T,r)-CP_3(T,r)CP_1(T,r)\label{eq:edge9}.\\
&&= CP_1(T,r)(1-CP_3(T,r)) +CP_3(T,r)\label{eq:edge91}.
\end{eqnarray}
$(iii)$ $S_{th}>T$:$ \text{ Let }S_{th}=T+\Delta, $ where $\Delta>0$, then $CP_f(r)$ in terms of $T$ can be given by
\begin{eqnarray}
CP_{F}(r, S_{th}> T)  &= \underset{i \in \psi}\prod\frac{1}{1+(T+\Delta)r^\alpha d_i^{-\alpha}}e^{-(T+\Delta) r^{\alpha}\frac{\sigma^2}{P}}+CP_3(T,r)-CP_3(T,r)CP_1(T+\Delta,r)\nonumber.\\ 
&= CP_1(T+\Delta,r)(1-CP_3(T,r)) +CP_3(T,r)\label{eq:edge10}.
\end{eqnarray}
Now, we compare the FFR coverage probability for $S_{th}< T$ and $S_{th}=T$ given by \eqref{eq:edge8} and \eqref{eq:edge9}, respectively. Since $CP_1(T-\Delta,r)>CP_1(T,r)$, this implies $CP_{F}(r, S_{th}< T)<CP_{F}(r, S_{th}= T)$. Similarly, we compare the FFR coverage probability for $S_{th}= T$ and $S_{th}>T$ given by \eqref{eq:edge91} and \eqref{eq:edge10}, respectively. Since   $CP_1(T+\Delta,r)<CP_1(T,r)$, this implies $CP_{F}(r, S_{th}= T)>CP_{F}(r, S_{th}> T)$.  Thus, FFR achieves the  maximum coverage  when $S_{th}= T$. Note that when one choose SINR threshold to be $S_{opt,C}$ then the coverage probability of FFR  is higher than FR$3$ coverage probability since $CP_{F}(r, S_{th}= T)= CP_1(T,r)(1-CP_3(T,r)) +CP_3(T,r)>CP_3(T,r)$. The reason for such a behaviour is as follows: only  users having low SINR (low fading gain for the desired signal and/or high fading gain for the interfering signal) move to the cell-edge region and they experience a new independent fading gain at the cell-edge region.  In other words, the increase in FFR coverage probability over the FR$3$ coverage probability is due to sub-band diversity  gain which is accrued by the system when users move from cell-centre to cell-edge.
\end{proof}
We  also numerically evaluate the coverage probability for $S_{th}> T$, $S_{th}= T$,  and $S_{th}<T$ and show that the numerical values match with our theoretical observation. Fig. \ref{fig:fig} shows the coverage probability of cell-centre user, cell-edge user, and a user in FFR for a fixed $T$ and three different values of $S_{th}$ where the coverage probability of a cell-edge user for  all $3$ values of $S_{th}$ is  equal to FR$3$ coverage. It is also observed that the coverage probability for FFR is maximum when $S_{th} = T$, and it is higher than the FR$3$ coverage probability as shown in Fig. \ref{fig:fig}.
  \begin{figure}[ht]
   \centering
   \includegraphics[scale=0.35]{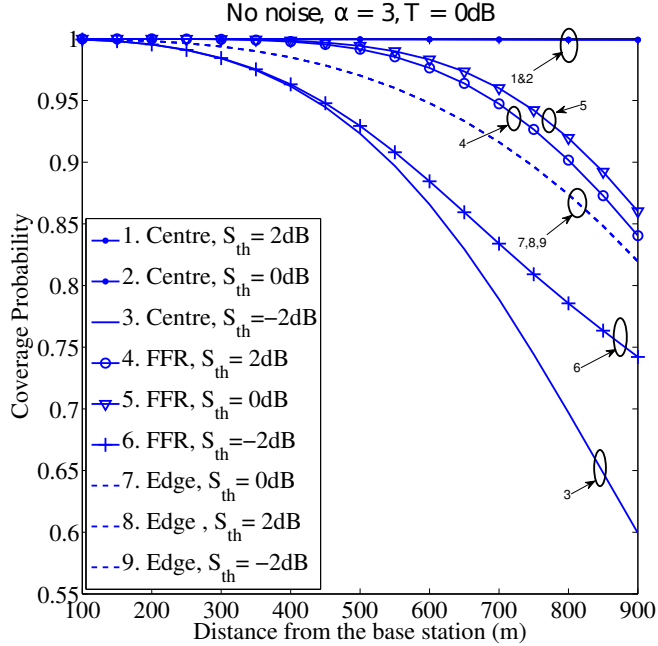}
  \caption{Coverage probability of cell-centre, cell-edge, and FFR  user with respect to distance from BS for different value of SIR Threshold $S_{th}$ when fading are independent across the sub-bands.}
  \label{fig:fig}
  \end{figure}
\subsection{Case $2 :$ $g$ and $\hat{g}$ are completely correlated  and $h_i$ and $\hat{h}_i$ are also completely correlated $\forall i$}
Note that the centre coverage probability is same for both the cases (case $1$ and case $2$) since a user does not change the sub-band when it is a cell-centre user. However, the edge coverage probability is different in case $1$ and case $2$, and in this case the coverage probability of a cell-edge user who is at distance $r$ meters from the BS in FFR network $CP_{F,e}(r)$ is given by
\begin{equation}
CP_{F,e}(r)=P[\hat{\eta}(r)>T|\eta(r)<S_{th}]=\frac{P[\hat{\eta}(r)>T,\eta(r)<S_{th}]}{P[\eta(r)<S_{th}]} \label{eq:ffr_edge2}.
\end{equation}
Putting the value of $CP_{f,c}$ and $CP_{f,e}$ from \eqref{eq:centre} and \eqref{eq:ffr_edge2} into Eq. \eqref{eq:ffr}, the coverage probability in FFR network $CP_f(r)$ can be written as
\begin{equation}
CP_f(r)=\underset{i \in \psi}\prod\frac{1}{1+\max\{T, S_{th}\}r^\alpha d_i^{-\alpha}}e^{-\max\{T, S_{th}\} r^{\alpha}\frac{\sigma^2}{P}}+P[\hat{\eta}(r)>T,\eta(r)<S_{th}]
\end{equation} 
Since $g$ and $\hat{g}$, are completely correlated and $h_i$ and $\hat{h}_i$ are also completely correlated $\forall i$ and hence we use following transformation to further simplify $CP_f(r)$:
\begin{equation}
P[\hat{\eta}(r)>T,\eta(r)<S_{th}]=P[\hat{\eta}(r)>T,\hat{\eta}(r)<\hat{S}_{th}].\label{eq:tran}
\end{equation}
Basically instead of marking a user as a cell-edge user based on the $\eta(r)$ (FR$1$ SINR), we mark on the basis of $\hat{\eta}(r)$ (FR$3$ SINR) by introducing a new SINR threshold $\hat{S}_{th}$. The threshold $\hat{S}_{th}$ is computed using the relation $P[\eta(r)<S_{th}]=P[\hat{\eta}(r)<\hat{S}_{th}]$. This makes sure that same user is marked as cell-edge user with both reuse patterns (FR$1$ and FR$3$).
Now, using the transformation given in \eqref{eq:tran}, $CP_f(r)$ can be  simplified as
\begin{equation}
CP_f(r)=\underset{i \in \psi}\prod\frac{1}{1+\max\{T, S_{th}\}r^\alpha d_i^{-\alpha}}e^{-\max\{T, S_{th}\} r^{\alpha}\frac{\sigma^2}{P}}+P[\hat{\eta}(r)>T]-P[\hat{\eta}(r)>\max\{\hat{S}_{th},T\}]
\end{equation}
In this case, to obtain the $S_{opt,C}$, we consider the following two possibilities: $(i)$ $S_{th}\geq T$, $(ii)$ $S_{th}<T$.\\
$(i)$ $S_{th}\geq T$: In this case,  $CP_f(r)$ in terms of $T$ can be given by
\begin{equation}
\textstyle
CP_{F}(r, S_{th}\geq T)= \underset{i \in \psi}\prod\frac{1}{1+S_{th}r^\alpha d_i^{-\alpha}}e^{-S_{th} r^{\alpha}\frac{\sigma^2}{P}}+CP_3(T,r)-CP_3(\hat{S}_{th},r)\label{eq:edge11}.
\end{equation}
Since $CP_3(\hat{S}_{th},r)=CP_1(S_{th},r)$ and $CP_1(S_{th},r)=\underset{i \in \psi}\prod\frac{1}{1+S_{th}r^\alpha d_i^{-\alpha}}e^{-S_{th} r^{\alpha}\frac{\sigma^2}{P}}$, hence 
\begin{equation}
CP_{F}(r, S_{th}\geq T)=CP_3(T,r).\label{eq:edge1}
\end{equation}
$(ii)$ $S_{th}<T$: In this case $CP_f(r)$ in terms of $T$  can be given by 
\begin{equation}
\textstyle
CP_{F}(r, S_{th} < T)= \underset{i \in \psi}\prod\frac{1}{1+Tr^\alpha d_i^{-\alpha}}e^{-T r^{\alpha}\frac{\sigma^2}{P}}+CP_3(T,r)-CP_3(\max\{\hat{S}_{th},T\},r)\label{eq:edge12}. 
\end{equation}
Note that when $S_{th}<T$, $\hat{S}_{th}$ could be higher or lower than $T$.  When $\hat{S}_{th}>T$,
\begin{equation}
CP_3(\max\{\hat{S}_{th},T\},r)=CP_3(\hat{S}_{th},r)=CP_1(S_{th},r)>CP_1(T,r)\text{ since } S_{th} < T.
\end{equation}
And when $\hat{S}_{th}<T$,
\begin{equation}
CP_3(\max\{\hat{S}_{th},T\},r)=CP_3(T,r)>CP_1(T,r).
\end{equation}
And hence,  
\begin{equation}
\textstyle
CP_{F}(r, S_{th} < T)= \underset{i \in \psi}\prod\frac{1}{1+Tr^\alpha d_i^{-\alpha}}e^{-T r^{\alpha}\frac{\sigma^2}{P}}+CP_3(T,r)-CP_3(\max\{\hat{S}_{th},T\},r)<CP_3(T,r)\label{eq:edge13}. 
\end{equation}
Comparing the FFR coverage probability for $S_{th}\geq T$ and $S_{th}<T$   given by \eqref{eq:edge1} and \eqref{eq:edge13}, respectively, it is apparent that $CP_{F}(r, S_{th} \geq T)> CP_{F}(r, S_{th} < T)$. In other words, when fading are fully correlated across the sub-bands the optimal choice of SINR threshold is $S_{th}\geq T$ and at optimal SINR threshold FFR achieves FR$3$ coverage probability. Unlike the case $1$, FFR coverage probability achieves no gain over FR$3$ coverage probability since there is no sub-band diversity gain while a user moves from cell-centre to cell-edge region.

We now present simulation result of  FFR coverage probability for  different channels (pedestrian A and vehicular A channel \cite{molisch2011wireless}) corresponding to optimal  SINR threshold as shown in Fig. \ref{fig:fig6}. For simulation, a system based on the LTE standard is considered where, corresponding to $5$MHz bandwidth, sampling rate $=7.68$ MHz is considered. The power delay profile of the channels are given in Table $1$ and Table $2$ are as defined in \cite{3gpp1}.
\begin{table}[ht]
        \centering
\caption{Pedestrian A Channel} 
\renewcommand{\tabcolsep}{0.35cm}
\renewcommand{\arraystretch}{1.5}
\begin{tabular}{|c| c| c| c|c | }
\hline 
Relative delay (ns) &0 &110 & 190 &410 \tabularnewline
\hline 
Relative power(dB) & 0.0 & -9.7&-19.2 &-22.8 \tabularnewline
\hline 
\end{tabular}
\end{table} 

\begin{table}[ht]
        \centering
\caption{Vehicular-A Channel} 
\renewcommand{\tabcolsep}{0.35cm}
\renewcommand{\arraystretch}{1.5}
\begin{tabular}{|c| c| c| c|c | c|c | }
\hline 
Relative delay (ns) &0 &310 & 710 &1090 &1730 & 2510\tabularnewline
\hline 
Relative power(dB) & 0.0 & -1.0&-9.0 &-10.0 &-15.0 &-20.0 \tabularnewline
\hline 
\end{tabular}
\end{table} 
It can be observed that coverage probability of both the channels, i.e., pedestrian A and vehicular A channel  is upper bounded by the case $1$ when $g$ and $\hat{g}$ are uncorrelated  and $h_i$ and $\hat{h}_i$ are also uncorrelated $\forall i$  and lower bounded by the case $2$ when $g$ and $\hat{g}$ are completely correlated  and $h_i$ and $\hat{h}_i$ are also completely correlated $\forall i$.
 \begin{figure}[ht]
 \centering
 \includegraphics[scale=0.35]{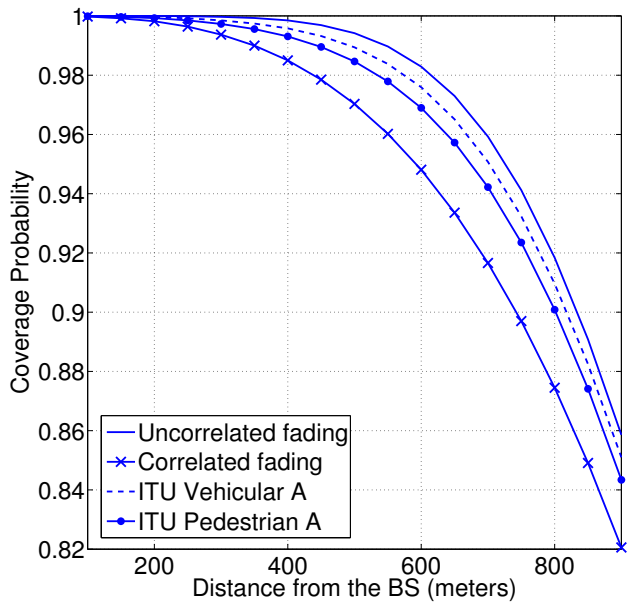}
 \caption{FFR coverage probability of different set of channels at optimal SINR threshold. Here $\alpha=3$, $T=0$dB are assumed.}
 \label{fig:fig6}
 \end{figure}
\section{Normalized average Rate}
In this section, we derive the normalized average rate of FFR scheme and find the optimum value of $S_{th}$ (denoted by $S_{opt,R}$) for which the normalized average rate is maximum. The average  rate of the system  $E[\ln(1+\text{SINR})]$ is not sensitive to the fact  whether users are in coverage or outage. Therefore to take into account the users coverage probability, the average rate is computed in \cite{6171996}, as $E[\ln(1+\text{SINR})|\text{SINR}>T]$. However, this metric does not give us an indication of impact of those users who are not in coverage on the average rate in a system. The rate metric\footnote{ Note that if average rate is defined as $E[\ln(1+\text{SINR})|\text{SINR}>T]$ then with increasing $T$ this average rate only reflects the rate of a few users in the system.} should reflect the fact that with increasing $T$,  the number of users not in coverage also increases.  Hence, we define the normalized average rate as $E[\ln(1+\text{SINR})|\text{SINR}>T]P[\text{SINR}>T]$. 
This metric assign a rate of zero to the users not in coverage and the average rate ($E[\ln(1+\text{SINR})]$) for users in coverage. Hence with increasing $T$ this rate reduces and it actually reflects the average rate behaviour in the system.

In order to derive the normalized average rate\footnote{An Interference limited system is assumed for simplicity. However the derivation of normalized average rate can be easily extended to the case where thermal noise is also considered.} for a FR$1$ and FFR system, we need to calculate the probability density function (pdf) of $r$ which is the distance between $0^{th}$ BS (serving BS) and the desired user. To calculate this pdf, we model the cell shape as an in radius circle, which is considered a fairly good approximation for hexagonal shape \cite{lte}, and assume that the users are uniformly distributed. Therefore, the pdf of $r$ $f_R(r)$ is given by
\begin{equation}
f_R(r)= \left\{
\begin{array}{rl}
&\frac{2r}{R^2}, r\leqslant R\\
&0,  r>R.
\end{array} \right.
\end{equation}
We now derive the normalized average rate for the planned FFR network.
\subsection{Normalized Average Rate in FR$1$ and FR$3$ Systems}
The average rate of a  user at a distance $r$ is $E[\ln(1+\eta(r))]$.
However, the normalized average rate at a distance $r$  in a FR$1$ system is  $ R_1(r)=E[\ln(1+\eta(r))|\eta(r)>T]P[\eta(r)>T]$.  Using the  fact that for a positive random variable $X=\ln(1+\eta(r))$, $E[X]=\int_{t>0}P(X>t)\text{d}t$, $R_1(r)$ can be rewritten as 
\begin{eqnarray}
R_1(r)=\underset{t>0}\int P[\ln(1+\eta(r))>t|\eta(r)>T]P[\eta(r)>T]\text{d}t \label{rate1}.
\end{eqnarray}
Since $\ln(1+\eta(r)) $ is a monotonic increasing function of $\eta(r)$, hence,
\begin{eqnarray}
P[\ln(1+\eta(r))>t|\eta(r)>T]=P[\eta(r)>e^t-1|\eta(r)>T]\stackrel{(b)}=\frac{P[\eta(r)>e^t-1,\eta(r)>T]}{P[\eta(r)>T]} \label{rate2},
\end{eqnarray}
here $(b)$ follows from Bayes' rule. Using \eqref{rate2} and simplifying \eqref{rate1}, $R_1(r)$ can be written as
\begin{eqnarray}
R_1(r)&=&\underset{t>0}\int P[\eta(r))>\max\{e^t-1,T\}]\text{d}t=\underset{t>0}\int P[g>\max\{e^t-1,T\} r^{\alpha}I]\text{d}t \label{rate_1}.
\end{eqnarray}
Using \eqref{eq:fr1} and \eqref{eq:reuse1}, Eq. \eqref{rate_1} can be further simplified as
\begin{equation}
R_1(r)=\underset{t>0}\int\underset{j \in \psi}\prod\frac{1}{1+\max\{e^t-1,T\} r^\alpha d_j^{-\alpha}}\text{d}t.
\end{equation}
 Now, to obtain the normalized average rate in FR$1$ system, spatial average can taken and $R_1$ can be expressed as
\begin{equation}
R_1=\int\limits_{0}^{R}\underset{t>0}\int\underset{j \in \psi}\prod\frac{1}{1+\max\{e^t-1,T\} r^\alpha d_j^{-\alpha}}\text{d}tf_R(r) \text{d}r \label{rate3}.
\end{equation}
The normalized average rate in  FR$3$  can be obtained in a similar fashion and is given by
\begin{equation}
R_3=\int\limits_{0}^{R} \underset{t>0}\int\underset{i \in \phi}\prod\frac{1}{1+\text{max}\{e^t-1,T\}r^\alpha d_i^{-\alpha}}\mathrm{d}tf_R(r)\mathrm{d}r.\label{fr3}
\end{equation}
\subsection{Normalized Average Rate of FFR System When $g$ and $\hat{g}$ are Uncorrelated  and $h_i$ and $\hat{h}_i$ are also Uncorrelated $\forall i$}
\begin{theorem}
The normalized average rate in FFR system is given by
\begin{equation}
R_f = \int\limits_{0}^{R}\underset{t>0}\int\left(\underset{j \in \psi}\prod\frac{1}{1+\max\{e^t-1,T,S_{th}\} r^\alpha d_j^{-\alpha}}+\frac{1}{3}\underset{i \in \phi}\prod\frac{P[\eta(r)<S_{th}]}{1+\max\{e^t-1,T\}r^\alpha d_i^{-\alpha}}\right)\mathrm{d}tf_R(r)\mathrm{d}r \label{eq:ffr_rat}.
\end{equation}
\end{theorem}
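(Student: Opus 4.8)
The plan is to decompose the normalized average rate of the FFR network at distance $r$, call it $R_f(r)$, into a cell-centre contribution and a cell-edge contribution, mirroring exactly the coverage-probability decomposition \eqref{eq:ffr}:
\[
R_f(r)=R_{f,c}(r)\,P[\eta(r)>S_{th}]+\tfrac13\,R_{f,e}(r)\,P[\eta(r)<S_{th}],
\]
where $R_{f,c}(r)$ is the normalized rate of a user classified as cell-centre (conditioned on $\eta(r)>S_{th}$) and $R_{f,e}(r)$ that of a user classified as cell-edge (conditioned on $\eta(r)<S_{th}$, but now operating on the FR$3$ sub-band with SINR $\hat\eta(r)$). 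The factor $\tfrac13$ on the edge term reflects the FR$3$ reuse factor: a cell-edge user occupies one of the three edge sub-bands $F_1,F_2,F_3$, reused once every three cells, whereas the centre band $F_0$ is common to all cells (reuse~$1$), consistent with how $R_1$ and $R_3$ are defined.

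For the cell-centre term I would repeat the argument of \eqref{rate1}--\eqref{rate3}, now carrying the extra conditioning event. Writing $R_{f,c}(r)=E[\,\ln(1+\eta(r))\,\mathbf 1\{\eta(r)>T\}\mid \eta(r)>S_{th}\,]$ and invoking the layer-cake identity $E[X]=\int_{t>0}P(X>t)\,\mathrm dt$ for the nonnegative variable $X=\ln(1+\eta(r))\,\mathbf 1\{\eta(r)>T\}$, observe that for $t>0$ the event $\{X>t\}$ is $\{\eta(r)>e^t-1\}\cap\{\eta(r)>T\}=\{\eta(r)>\max(e^t-1,T)\}$. Bayes' rule then folds $S_{th}$ into the same maximum, and multiplying back by $P[\eta(r)>S_{th}]$ cancels the denominator, giving $R_{f,c}(r)P[\eta(r)>S_{th}]=\int_{t>0}P[\eta(r)>\max(e^t-1,T,S_{th})]\,\mathrm dt$. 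The interference-limited specialisation of the Rayleigh-fading Laplace-transform computation in \eqref{eq:reuse1} turns this into $\int_{t>0}\prod_{j\in\psi}\frac{1}{1+\max\{e^t-1,T,S_{th}\}\,r^\alpha d_j^{-\alpha}}\,\mathrm dt$, which is the first term of \eqref{eq:ffr_rat}.

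For the cell-edge term, the crucial point — already exploited for coverage in \eqref{eq:ffr_edge1} — is that in Case~$1$ the FR$3$ fading powers $\hat g,\{\hat h_i\}$ are independent of $g,\{h_i\}$ and hence of the classifying event $\{\eta(r)<S_{th}\}$. Therefore $R_{f,e}(r)=E[\,\ln(1+\hat\eta(r))\,\mathbf 1\{\hat\eta(r)>T\}\,]$ with the conditioning removed, and repeating the same layer-cake and Laplace-transform steps over the FR$3$ interferer set $\phi$ gives $R_{f,e}(r)=\int_{t>0}\prod_{i\in\phi}\frac{1}{1+\max\{e^t-1,T\}\,r^\alpha d_i^{-\alpha}}\,\mathrm dt$, i.e.\ the per-location FR$3$ normalized rate of \eqref{fr3}. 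Multiplying by $\tfrac13\,P[\eta(r)<S_{th}]$ (which is independent of $t$ and can be moved inside the integral and the product) yields the second term of \eqref{eq:ffr_rat}; taking the spatial average $\int_0^R(\cdot)\,f_R(r)\,\mathrm dr$ then completes the proof.

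I expect the only genuinely delicate step to be the layer-cake bookkeeping: one must verify that, for $t>0$, the superlevel set of $\ln(1+\eta)\,\mathbf 1\{\eta>T\}$ is the intersection of the two half-lines, so that the coverage threshold $T$ and the rate-integration variable $e^t-1$ enter through a single $\max$, and then track carefully that $S_{th}$ joins this $\max$ in the centre term but is absent from the edge term because a reclassified user is evaluated on a statistically fresh FR$3$ channel. Everything else — the Laplace transforms and the integration against $f_R(r)$ — is mechanical and identical to the FR$1$/FR$3$ derivations already carried out.
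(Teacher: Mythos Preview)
Your proposal is correct and follows essentially the same approach as the paper: the same centre/edge decomposition with the $\tfrac13$ reuse factor, the same layer-cake reduction that collapses the conditioning events into a single $\max$ threshold, the same use of Case~$1$ independence to strip the conditioning from the edge term, and the same Laplace-transform evaluation against Rayleigh fading. The only cosmetic difference is that you package the normalized rate as $E[\ln(1+\eta)\mathbf 1\{\eta>T\}\mid\cdot\,]$ whereas the paper writes it as $E[\ln(1+\eta)\mid \eta>T,\cdot\,]P[\eta>T\mid\cdot\,]$; these are identical, and the subsequent manipulations coincide.
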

\begin{proof}
Since a cell-centre user is a user with $\eta(r)>S_{th}$, the normalized average rate of cell-centre users in FFR system $R_c(r)$ can be written as
\begin{equation}
R_c(r)=E[\ln(1+\eta(r))|\eta(r)>T,\eta(r)>S_{th}]P[\eta(r)>T|\eta(r)>S_{th}]\label{eq:centre_rate3}.
\end{equation}
Similarly, since a cell-edge user is a user with $\eta(r)<S_{th}$, the normalized average rate of cell-edge users in FFR system $R_e(r)$ can be written as
\begin{equation}
R_e(r)=E[\ln(1+\hat{\eta}(r))|\hat{\eta}(r)>T,\eta(r)<S_{th}]P[\hat{\eta}(r)>T|\eta(r)<S_{th}]\label{eq:edge_rate3}.
\end{equation}
Now, the normalized average rate  in FFR system $R_f(r)$ can be written as
\begin{eqnarray}
R_f(r)=R_c(r)P[\eta(r)>S_{th}]+
\frac{1}{3}R_e(r)P[\eta(r)<S_{th}].
\label{eq:ffr_rate}
\end{eqnarray}
Here the first term  denotes the normalized average rate  contributed by cell-centre users, and the second term  denotes the contribution from cell-edge users. Using the methods outlined in section IV$.A$, $R_c(r)P[\eta(r)>S_{th}]$  can be written as
\begin{eqnarray*}
R_c(r)P[\eta(r)>S_{th}]=\underset{t>0}\int P[\ln(1+\eta(r))>t,\eta(r)>T,\eta(r)>S_{th}]\text{dt}
\end{eqnarray*}
\begin{eqnarray}
 =\underset{t>0}\int P[\eta(r)>\max\{e^t-1,T,S_{th}\}]\text{dt} \label{eq:rate_cen}.
\end{eqnarray}
Using \eqref{eq:fr1} and \eqref{eq:reuse1}, this can be further simplified as
\begin{equation}
R_c(r)P[\eta(r)>S_{th}]=\underset{t>0}\int\underset{j \in \psi}\prod\frac{1}{1+\max\{e^t-1,T,S_{th}\} r^\alpha d_j^{-\alpha}}\text{d}t\label{eq:centre_rate1}.
\end{equation}
Again, similar to section IV$.A$ using the fact that $\ln(1+\hat{\eta}(r))$ is a positive RV and monotonic increasing function of $\hat{\eta}(r)$, one can write $R_e(r)$ as
\begin{eqnarray*}
R_e(r)=\underset{t>0}\int \frac{P[\ln(1+\hat{\eta}(r))>t,\hat{\eta}(r)>T,\eta(r)<S_{th}]}{P[\eta(r)<S_{th}]}\text{dt}.
\end{eqnarray*}
Further simplifying $R_e(r)$, one obtains
\begin{equation}
R_e(r)=\underset{t>0}\int\frac{ P[\hat{\eta}(r)>\max\{e^t-1,T\},\eta(r)<S_{th}]}{{P[\eta(r)<S_{th}]}}\text{dt} \label{eq:edge_rate4}.
\end{equation}
Since $g$ and $\hat{g}$ are i.i.d and also $h_i$ and $\hat{h}_i$ are i.i.d and hence, $R_e(r)$ can be written as
\begin{equation}
R_e(r)= \underset{t>0}\int\underset{i \in \phi}\prod\frac{1}{1+\text{max}\{e^t-1,T\}r^\alpha d_i^{-\alpha}}\mathrm{d}t \label{approx_rate1}.
\end{equation}
Recalling the expression for $R_3$ given in \eqref{fr3}, one can see that the normalized average rate of cell-edge users in FFR system is  equal to the normalized rate of FR$3$ system. 
Finally putting back   \eqref{eq:centre_rate1} and \eqref{approx_rate1} into \eqref{eq:ffr_rate} and after averaging over the spatial dimension, 
the normalized average rate in FFR system is given by
\begin{equation}
R_f = \int\limits_{0}^{R}\underset{t>0}\int\left(\underset{j \in \psi}\prod\frac{1}{1+\max\{e^t-1,T,S_{th}\} r^\alpha d_j^{-\alpha}}+\frac{1}{3}\underset{i \in \phi}\prod\frac{P[\eta(r)<S_{th}]}{1+\max\{e^t-1,T\}r^\alpha d_i^{-\alpha}}\right)\mathrm{d}tf_R(r)\mathrm{d}r \label{eq:ffr_rat1}.
\end{equation}
\end{proof}

\subsection{Optimum value of SIR Threshold $S_{opt,R}$ when $g$ and $\hat{g}$ are uncorrelated and $h_i$ and $\hat{h_i}$ are also uncorrelated $\forall i$}
The optimum value of $S_{th}$ (denoted by $S_{opt,R}$) for which the normalized average rate in FFR system is maximized is derived and it is shown to be a function of $T$ and path loss exponent.
\begin{theorem}
The value of $S_{th}$ which maximize the normalized average rate in the FFR system is $S_{opt,R}=\max(T,T')$, where $T'$ can be obtained as the solution of following equation
\begin{align}
\int\limits_{0}^{R}\left(\frac{\left(K(r)-\ln{(1+T')}\right)\sum\limits_{i\in \psi}r^\alpha d_i^{-\alpha}\left ( \underset{j \in \psi\setminus i}\prod{(1+T' r^\alpha d_j^{-\alpha}})\right)}{\left(\underset{j \in \psi}\prod{(1+T' r^\alpha d_j^{-\alpha})}\right)^2}\right)f_R(r)\text{d}r=0\label{delta3},
\end{align}
here, $K(r)$ is defined in \eqref{approx}.
\end{theorem}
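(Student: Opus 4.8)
The plan is to treat $S_{th}$ as a scalar parameter in the rate expression \eqref{eq:ffr_rat} and to use the fact that $S_{th}$ enters $R_f$ in only two places: inside $\max\{e^t-1,T,S_{th}\}$ in the cell-centre term, and through $P[\eta(r)<S_{th}]=1-CP_1(S_{th},r)$ in the cell-edge term. Raising $S_{th}$ shrinks the first term but enlarges the second, so the optimum balances the two. I would first handle the range $S_{th}\le T$: there $\max\{e^t-1,T,S_{th}\}=\max\{e^t-1,T\}$, so the cell-centre term does not depend on $S_{th}$, while the cell-edge term is non-decreasing in $S_{th}$ because $CP_1(S_{th},r)$ is decreasing. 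Hence $R_f$ is non-decreasing on $[0,T]$ and $\max_{S_{th}\ge 0}R_f=\max_{S_{th}\ge T}R_f$; this is exactly why $T$ appears in $\max(T,T')$, and it suffices to optimize over $S_{th}\ge T$.

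For $S_{th}\ge T$ I would compute $\mathrm dR_f/\mathrm dS_{th}$. The only subtlety in the cell-centre term is that the kink $e^t-1=S_{th}$ in the inner integrand moves with $S_{th}$: splitting the $t$-integral at $t=\ln(1+S_{th})$ (where $\max\{e^t-1,T,S_{th}\}$ switches from $S_{th}$ to $e^t-1$) rewrites the cell-centre term as
\[
\int_{0}^{R}\left[\ln(1+S_{th})\,CP_1(S_{th},r)+\int_{\ln(1+S_{th})}^{\infty}\underset{j\in\psi}\prod\frac{1}{1+(e^t-1)\,r^\alpha d_j^{-\alpha}}\,\mathrm{d}t\right]f_R(r)\,\mathrm{d}r .
\]
Differentiating, the Leibniz boundary term from the moving lower limit exactly cancels the $\frac{1}{1+S_{th}}CP_1(S_{th},r)$ piece arising from $\frac{\mathrm d}{\mathrm dS_{th}}\bigl[\ln(1+S_{th})CP_1(S_{th},r)\bigr]$, leaving $-\ln(1+S_{th})\frac{\mathrm d}{\mathrm dS_{th}}CP_1(S_{th},r)$; from the cell-edge term one picks up $-K(r)\frac{\mathrm d}{\mathrm dS_{th}}CP_1(S_{th},r)$, where $K(r)$ is the $\tfrac13$-scaled per-distance FR$3$ normalized rate of \eqref{approx}. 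With $-\frac{\mathrm d}{\mathrm dS_{th}}CP_1(S_{th},r)=\bigl(\prod_{j\in\psi}(1+S_{th}r^\alpha d_j^{-\alpha})\bigr)^{-2}\sum_{i\in\psi}r^\alpha d_i^{-\alpha}\prod_{j\in\psi\setminus i}(1+S_{th}r^\alpha d_j^{-\alpha})>0$, this gives
\[
\frac{\mathrm dR_f}{\mathrm dS_{th}}=\int_{0}^{R}\bigl(K(r)-\ln(1+S_{th})\bigr)\left(-\frac{\mathrm d}{\mathrm dS_{th}}CP_1(S_{th},r)\right)f_R(r)\,\mathrm{d}r ,
\]
and equating this to zero at $S_{th}=T'$ is exactly \eqref{delta3}.

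To read off the maximizer, note that the weight $-\frac{\mathrm d}{\mathrm dS_{th}}CP_1(S_{th},r)$ is strictly positive, $K(r)$ is a fixed positive function of $r$, and $\ln(1+S_{th})$ increases continuously from $0$ to $\infty$; hence $\mathrm dR_f/\mathrm dS_{th}>0$ for $S_{th}$ small and $\mathrm dR_f/\mathrm dS_{th}<0$ for $S_{th}$ large, so by the intermediate value theorem \eqref{delta3} has a root $T'>0$. If $T'\ge T$, then $R_f$ increases on $[T,T']$ and decreases beyond $T'$, so $S_{opt,R}=T'$; if $T'<T$, then $\mathrm dR_f/\mathrm dS_{th}<0$ throughout $[T,\infty)$, and with the first paragraph $S_{opt,R}=T$. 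In both cases $S_{opt,R}=\max(T,T')$.

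The main obstacle is this last step: verifying that $\mathrm dR_f/\mathrm dS_{th}$ changes sign only once on $(T,\infty)$, i.e.\ that $R_f$ is quasiconcave there, so the critical point is the global maximizer. Writing $\mathrm dR_f/\mathrm dS_{th}=\bigl(\int_0^R (-\tfrac{\mathrm d}{\mathrm dS_{th}}CP_1)\,f_R\,\mathrm dr\bigr)\bigl(\bar K(S_{th})-\ln(1+S_{th})\bigr)$, where $\bar K(S_{th})$ is the $(-\tfrac{\mathrm d}{\mathrm dS_{th}}CP_1)f_R$-weighted average of $K(r)$ over $[0,R]$, the sign is that of $\bar K(S_{th})-\ln(1+S_{th})$; since $\bar K$ stays between $\min_r K(r)$ and $\max_r K(r)$ while $\ln(1+S_{th})$ is strictly increasing, this is eventually strictly negative, and I would argue it has a single sign change by bounding $\bar K'$ against $\tfrac1{1+S_{th}}$ (or, failing a clean bound, by invoking the monotone decay of the FR$1$ SINR tail together with the numerical evidence). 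The remaining ingredients — interchanging $\tfrac{\mathrm d}{\mathrm dS_{th}}$ with the double integral by dominated convergence, the Leibniz cancellation, and the product-rule identity for $\tfrac{\mathrm d}{\mathrm dS_{th}}CP_1$ — are routine.
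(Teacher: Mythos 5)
Your route is essentially the paper's: eliminate $S_{th}<T$ by noting the centre term is unchanged and the edge term is non-decreasing there, restrict to $S_{th}\ge T$, split the centre-term $t$-integral at $t=\ln(1+S_{th})$, differentiate via Leibniz's rule with the boundary-term cancellation, and set the derivative to zero; your direct sign analysis is just an equivalent repackaging of the paper's Lagrange-multiplier/KKT bookkeeping for the constraint $S_{th}\ge T$. The one genuine gap is in the cell-edge contribution. The edge part of $R_f(S_{th}\ge T)$ at distance $r$ is $P[\eta(r)<S_{th}]\,K(T,r)$ with $K(T,r)=\frac13\int_{t>0}\prod_{i\in\phi}\bigl(1+\max\{e^t-1,T\}\,r^\alpha d_i^{-\alpha}\bigr)^{-1}\mathrm{d}t$, so differentiating picks up the kernel $K(T,r)$, which still depends on the target $T$ — not the $T$-free $K(r)$ of \eqref{approx} that you insert. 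With $K(T,r)$ the stationarity condition is \eqref{delta1}, whose root in general varies with $T$, and the theorem's claim that $T'$ is a fixed threshold depending only on the path loss exponent and geometry (so that $S_{opt,R}=\max(T,T')$ is meaningful as stated) does not yet follow. The paper closes exactly this step with the approximation $K(T,r)\approx K(r)$ in \eqref{approx}, justified by $r^\alpha d_i^{-\alpha}\ll 1$ for the second-tier interferer set $\phi$ of a planned FR$3$ layout (and verified numerically in Fig.~\ref{fig:fig7}); note the justification is specific to this setting and fails for $\delta<3$ or unplanned deployments. Your proof needs this approximation, stated and justified, to arrive at \eqref{delta3}.

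On the issue you flag as the main obstacle — that $\mathrm{d}R_f/\mathrm{d}S_{th}$ changes sign only once on $(T,\infty)$, so the critical point is the global maximizer — you are not behind the paper: the paper only applies the KKT conditions, which are necessary, does not establish quasiconcavity of $R_f$ in $S_{th}$, and leans on the numerical evaluation in Fig.~\ref{fig:ffr_rate}. So apart from being candid about that shared limitation, the only substantive repair your argument needs is the $K(T,r)\to K(r)$ step above.
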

\begin{proof}
To obtain the $S_{opt,R}$, we consider the three possibilities:  $(i).$ $S_{th}<T$, $(ii).$ $S_{th}=T$, and $(iii).$ $S_{th}>T$.

{\em Case $(i)$ $S_{th}<T$}:$ \text{ Let }S_{th}=T-\Delta, $ where $\Delta>0$, then $R_f$  can be given by
\begin{equation}
 R_f = \int\limits_{0}^{R}\underset{t>0}\int\left(\underset{j \in \psi}\prod\frac{1}{1+\max\{e^t-1,T\} r^\alpha d_j^{-\alpha}}+\frac{1}{3}\underset{i \in \phi}\prod\frac{P[\eta(r)<T-\Delta]}{1+\max\{e^t-1,T\}r^\alpha d_i^{-\alpha}}\right)\mathrm{d}tf_R(r)\mathrm{d}r\label{eq:compare1}.
\end{equation}
{\em Case $(ii)$ $S_{th}=T$}: $R_f$ is given by
\begin{equation}
 R_f = \int\limits_{0}^{R}\underset{t>0}\int\left(\underset{j \in \psi}\prod\frac{1}{1+\max\{e^t-1,T\} r^\alpha d_j^{-\alpha}}+\frac{1}{3}\underset{i \in \phi}\prod\frac{P[\eta(r)<T]}{1+\max\{e^t-1,T\}r^\alpha d_i^{-\alpha}}\right)\mathrm{d}tf_R(r)\mathrm{d}r.\label{eq:compare2}
\end{equation}
It can be observed that the first term of integrand in both \eqref{eq:compare1} and \eqref{eq:compare2} is the same. However, the second term of integrand in \eqref{eq:compare1} is lower than the second term of integrand in \eqref{eq:compare2} since $P[\eta(r)<T-\Delta]<P[\eta(r)<T]$. Hence, the normalized average rate of a FFR system is higher when  $S_{th}=T$  than when $S_{th}<T$.
Now, let us compare cases $(ii)$ and $(iii)$.

{\em Case $(iii)$  $S_{th}>T$}: $\text{Let }S_{th}=T+\Delta$, then $R_f$ can be given by with $\Delta>0$
\begin{equation}
 R_f = \int\limits_{0}^{R}\underset{t>0}\int\left(\underset{j \in \psi}\prod\frac{1}{1+\max\{e^t-1,T+\Delta\} r^\alpha d_j^{-\alpha}}+\frac{1}{3}\underset{i \in \phi}\prod\frac{P[\eta(r)<T+\Delta]}{1+\max\{e^t-1,T\}r^\alpha d_i^{-\alpha}}\right)\mathrm{d}tf_R(r)\mathrm{d}r\label{eq:compare3}.
\end{equation}
It is apparent  that the first term in \eqref{eq:compare2} is higher than the first term in \eqref{eq:compare3} while the second term in \eqref{eq:compare2} is lower than the corresponding term in \eqref{eq:compare3}. Hence, we combine \eqref{eq:compare2} and \eqref{eq:compare3} and find the $S_{opt,R}$ when $S_{th}\geq T$. Basically, we need to maximize the $R_f( S_{th}\geq T)$ given in \eqref{eq:compare4} with respect to $S_{th}$ for a given inequality $S_{th}\geq T$. Since $R_f( S_{th}\geq T)$ is differentiable with respect to $S_{th}$, we use Karush-Kuhn-Tucker (KKT) conditions to obtain the $S_{opt,R}$ \cite{boyd2004convex}.   Combining \eqref{eq:compare2} and \eqref{eq:compare3}, the normalized average rate expression  can be written as 
\begin{equation}
\textstyle
R_f( S_{th}\geq T) = \int\limits_{0}^{R}\underset{t>0}\int\left(\underset{j \in \psi}\prod\frac{1}{1+\max\{e^t-1,S_{th}\} r^\alpha d_j^{-\alpha}}+\frac{1}{3}\underset{i \in \phi}\prod\frac{P[\eta(r)<S_{th}]}{1+\max\{e^t-1,T\}r^\alpha d_i^{-\alpha}}\right)\mathrm{d}tf_R(r)\mathrm{d}r\label{eq:compare4}.
\end{equation}
To maximize the $R_f( S_{th}\geq T)$, we need to maximize the cost function $J$ and it is given by 
\begin{equation}
J=R_f( S_{th}\geq T)+\lambda(S_{th}-T),
\end{equation}
here $\lambda$ is Lagrange multiplier associated  with the inequality constraint $S_{th}\geq T$. The optimal value of $S_{th}$, i.e, $S_{opt,R}$ must satisfy the  KKT necessary  conditions and they are given by
\begin{align}
&\frac{dJ}{d S_{th}}=0,\label{kkt1}\\
&\lambda\geq 0,\\
& \lambda(S_{th}-T)=0.\label{kkt3}
\end{align}
To solve \eqref{kkt1}, we split the first part of integrand of $R_f( S_{th}\geq T) $ as follows:
\begin{equation}
\textstyle
\underset{t>0}\int\underset{j \in \psi}\prod\frac{1}{1+\max\{e^t-1,S_{th}\} r^\alpha d_j^{-\alpha}}\text{d}t=\overset{\ln(1+S_{th})}{\underset{t>0}\int}\underset{j \in \psi}\prod\frac{1}{1+S_{th} r^\alpha d_j^{-\alpha}}\text{d}t +\overset{\infty}{\underset{\ln(1+S_{th})}\int}\underset{j \in \psi}\prod\frac{1}{1+(e^t-1) r^\alpha d_j^{-\alpha}}\text{d}t
\end{equation}
Also, putting  $P[\eta(r)<S_{th}]=\left(1-\underset{j \in \psi}\prod\frac{1}{1+S_{th} r^\alpha d_j^{-\alpha}}\right)$,  $R_f( S_{th}\geq T)$ can be rewritten in the following form
\begin{align}
  R_f(S_{th}\geq T)= &\int\limits_{0}^{R}\Bigg(\underset{j \in \psi}\prod\frac{\ln(1+S_{th})}{1+S_{th} r^\alpha d_j^{-\alpha}} +\overset{\infty}{\underset{\ln(1+S_{th})}\int}\underset{j \in \psi}\prod\frac{1}{1+(e^t-1) r^\alpha d_j^{-\alpha}}\mathrm{d}t\nonumber\\
&+\left(1-\underset{j \in \psi}\prod\frac{1}{1+S_{th} r^\alpha d_j^{-\alpha}}\right)\underbrace{\frac{1}{3}\underset{t>0}\int \underset{i \in \phi}\prod\frac{1}{1+\max\{e^t-1,T\}r^\alpha d_i^{-\alpha}}\mathrm{d}t}_{K(T,r)}\Bigg)f_R(r)\mathrm{d}r.
\end{align}
Using Leibniz's rule\footnote{Leibniz's rule states that if $f(x,\theta)$ is a function such that $\frac{d}{d \theta }f(x,\theta)$ exist, and is continuous, then
$\frac{d}{d\theta} \left (\int_{a(\theta)}^{b(\theta)} f(x,\theta)\,dx \right )= \int_{a(\theta)}^{b(\theta)} \frac{d}{d \theta }(f(x,\theta))\,dx + f(b(\theta),\theta)\frac{d}{d \theta }b(\theta)-f(a(\theta),\theta)\frac{d}{d \theta }a(\theta)$.} while differentiating $R_f(S_{th}\geq T)$ with respect to $S_{th}$, $\frac{dJ}{d S_{th}}$ can be written as 
\begin{equation*}
\frac{dJ}{d S_{th}}=\lambda+ \int\limits_{0}^{R}\Bigg(\frac{\frac{\underset{j \in \psi}\prod{(1+S_{th} r^\alpha d_j^{-\alpha}})}{1+S_{th}}-\ln{(1+S_{th})}\frac{d}{dS_{th}} \left ( \underset{j \in \psi}\prod{(1+S_{th} r^\alpha d_j^{-\alpha}})\right)}{\left(\underset{j \in \psi}\prod{(1+S_{th} r^\alpha d_j^{-\alpha})}\right)^2}
\end{equation*}
\begin{equation*}
-\underset{j \in \psi}\prod\frac{1}{1+S_{th} r^\alpha d_j^{-\alpha}}\left(\frac{1}{1+S_{th}}\right)+\frac{K(T,r)\frac{d}{dS_{th}} \left ( \underset{j \in \psi}\prod{(1+S_{th} r^\alpha d_j^{-\alpha}})\right)}{\left(\underset{j \in \psi}\prod{(1+S_{th} r^\alpha d_j^{-\alpha})}\right)^2}\Bigg)f_R(r)\mathrm{d}r.
\end{equation*}
Simplifying $\frac{dJ}{d S_{th}}$ and equating it to zero in  accordance with  KKT conditions in \eqref{kkt1}  one obtains
\begin{align*}
\textstyle
\frac{dJ}{d S_{th}}=\lambda+\int\limits_{0}^{R}\left(\frac{K(T,r)\frac{d}{dS_{th}} \left ( \underset{j \in \psi}\prod{(1+S_{th} r^\alpha d_j^{-\alpha}})\right)}{\left(\underset{j \in \psi}\prod{(1+S_{th} r^\alpha d_j^{-\alpha})}\right)^2}-\frac{\ln{(1+S_{th})}\frac{d}{dS_{th}} \left ( \underset{j \in \psi}\prod{(1+S_{th} r^\alpha d_j^{-\alpha}})\right)}{\left(\underset{j \in \psi}\prod{(1+S_{th} r^\alpha d_j^{-\alpha})}\right)^2}\right)f_R(r)\text{d}r=0.
\end{align*}
Further simplifying,  one obtains
\begin{align}
\frac{dJ}{dS_{th}}=\lambda+\int\limits_{0}^{R}\left(\frac{\left(K(T,r)-\ln{(1+S_{th})}\right)\sum\limits_{i\in \psi}r^\alpha d_i^{-\alpha}\left ( \underset{j \in \psi\setminus i}\prod{(1+S_{th} r^\alpha d_j^{-\alpha}})\right)}{\left(\underset{j \in \psi}\prod{(1+S_{th} r^\alpha d_j^{-\alpha})}\right)^2}\right)f_R(r)\text{d}r=0\label{delta}
\end{align}
Now we consider two cases: $(i)$ when $S_{th}>T$ and $(ii)$ when $S_{th}=T$

{\em Case $(i)$ $S_{th}>T$}:  Note that $ \lambda(S_{th}-T)=0$ (from \eqref{kkt3}) and since $S_{th}>T$, this implies  $\lambda=0$. Thus when $S_{th}>T$, Eq. \eqref{delta} can be written as
\begin{align}
\int\limits_{0}^{R}\left(\frac{\left(K(T,r)-\ln{(1+S_{th})}\right)\sum\limits_{i\in \psi}r^\alpha d_i^{-\alpha}\left ( \underset{j \in \psi\setminus i}\prod{(1+S_{th} r^\alpha d_i^{-\alpha}})\right)}{\left(\underset{j \in \psi}\prod{(1+S_{th} r^\alpha d_j^{-\alpha})}\right)^2}\right)f_R(r)\text{d}r=0\label{delta1}
\end{align}
Solving $\eqref{delta1}$ analytically in its current form to obtain $S_{opt,R}$ is a difficult problem since $d_i$s are also the function of $r$.  However, we can solve \eqref{delta1} by exploiting the following observation that $K(T,r)$ is nearly independent of $T$ for sufficiently small value of $T$ and it can be approximated by $K(r)$, which is given by
\begin{equation}
K(T,r)=\frac{1}{3}\underset{t>0}\int \underset{i \in \phi}\prod\frac{1}{1+\max\{e^t-1,T\}r^\alpha d_i^{-\alpha}}\mathrm{d}t\approx \frac{1}{3}\underset{t>0}\int \underset{i \in \phi}\prod\frac{1}{1+(e^t-1)r^\alpha d_i^{-\alpha}}\mathrm{d}t=K(r)\label{approx}
\end{equation}
The approximation $K(T,r)\approx  K(r)$ in \eqref{approx} is possible since $r^\alpha d_i^{-\alpha}\ll 1$ for $i\in \phi$, because in a planned network $d_i$ are the distances from the $2$nd tier BS. Note that this approximation is not possible when $\delta<3$ or in an unplanned network. To show the tightness of the approximation we plot the values of $K(T,r)$ and $K(r)$ with respect to $T$ for three different values of $r$ as shown in Fig. \ref{fig:fig7}. It can be observed that the value of $K(T,r)$ is very close to the value of $K(r)$ for small value of $T$. Thus making use of $K(T,r)\approx K(r)$  one can rewrite \eqref{delta1} as 
\begin{figure}[ht]
 \centering
 \includegraphics[scale=0.3]{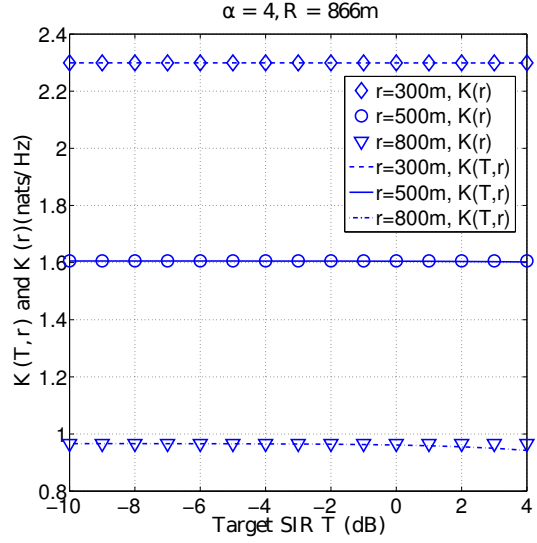}
 \caption{Variation in $K(T,r)$ and $K(r)$ with respect to target SIR $T$.}
 \label{fig:fig7}
 \end{figure}
\begin{align}
\int\limits_{0}^{R}\left(\frac{\left(K(r)-\ln{(1+S_{th})}\right)\sum\limits_{i\in \psi}r^\alpha d_i^{-\alpha}\left ( \underset{j \in \psi\setminus i}\prod{(1+S_{th} r^\alpha d_j^{-\alpha}})\right)}{\left(\underset{j \in \psi}\prod{(1+S_{th} r^\alpha d_j^{-\alpha})}\right)^2}\right)f_R(r)\text{d}r=0\label{delta2}
\end{align}
The solution of above integral equation is fairly simple. Hence, one can obtain the value of $S_{opt,R}$ by solving \eqref{delta2} numerically (using Mathematica). One can observe from \eqref{delta2} that as $K(r)$  will increase, $S_{opt,R}$ will also increase. Intuitively it is true since  $K(r)$ denote the rate achieved by FR$3$ and as rate of FR$3$ will increase percentage of cell-edge user should also increase and hence $S_{opt,R}$.   We have observed that $S_{opt,R}$ as a function of the path loss exponent, and it is constant with respect to $T$ (when $S_{th}>T$). We denote $S_{opt,R}$ by $T'$ for the case when $S_{th}>T$.

{\em Case $(ii)\text{ } S_{th}=T$}: In this case $\lambda>0$ since $ \lambda(S_{th}-T)=0$ and hence $S_{opt,R}=T$

Hence based on the solution obtained for both the cases the optimal value of $S_{th}$ is given by $S_{opt,R}=\max(T,T')$ where $T'$ is a function of the path loss exponent.
\end{proof} 
We also numerically evaluate the expression  in \eqref{eq:ffr_rat} and show that the numerical values match with our theoretical derivation (Note that in \eqref{eq:ffr_rat} we do not approximate $K(T,r)$). Fig. \ref{fig:ffr_rate} plots the normalized average rate of FR$1$, FR$3$, and FFR with respect to $S_{th}$ for three values of $T$ namely, $0$dB, $1$dB, and $2$dB. We see that for $T=1$dB and $2$dB, using $S_{th}=T$ gives the maximum normalized average rate. However, for $T=0$dB, $S_{th}=1$dB gives the maximum normalized average rate. In other words, there exists a  SIR $T'$, such that for $T\geq T'$, $S_{th}=T$ performs better than $S_{th}>T$. Also, for $T<T'$, $S_{th}=T'$ gives the maximum normalized rate. Therefore,  $S_{th}=\max(T, T')$ gives the maximum normalized average rate and it match our theoretical observation.  \begin{figure*}
    
        \subfigure[Maximum rate at $S_{th}=1$dB]
        {
            \includegraphics[width=2in]{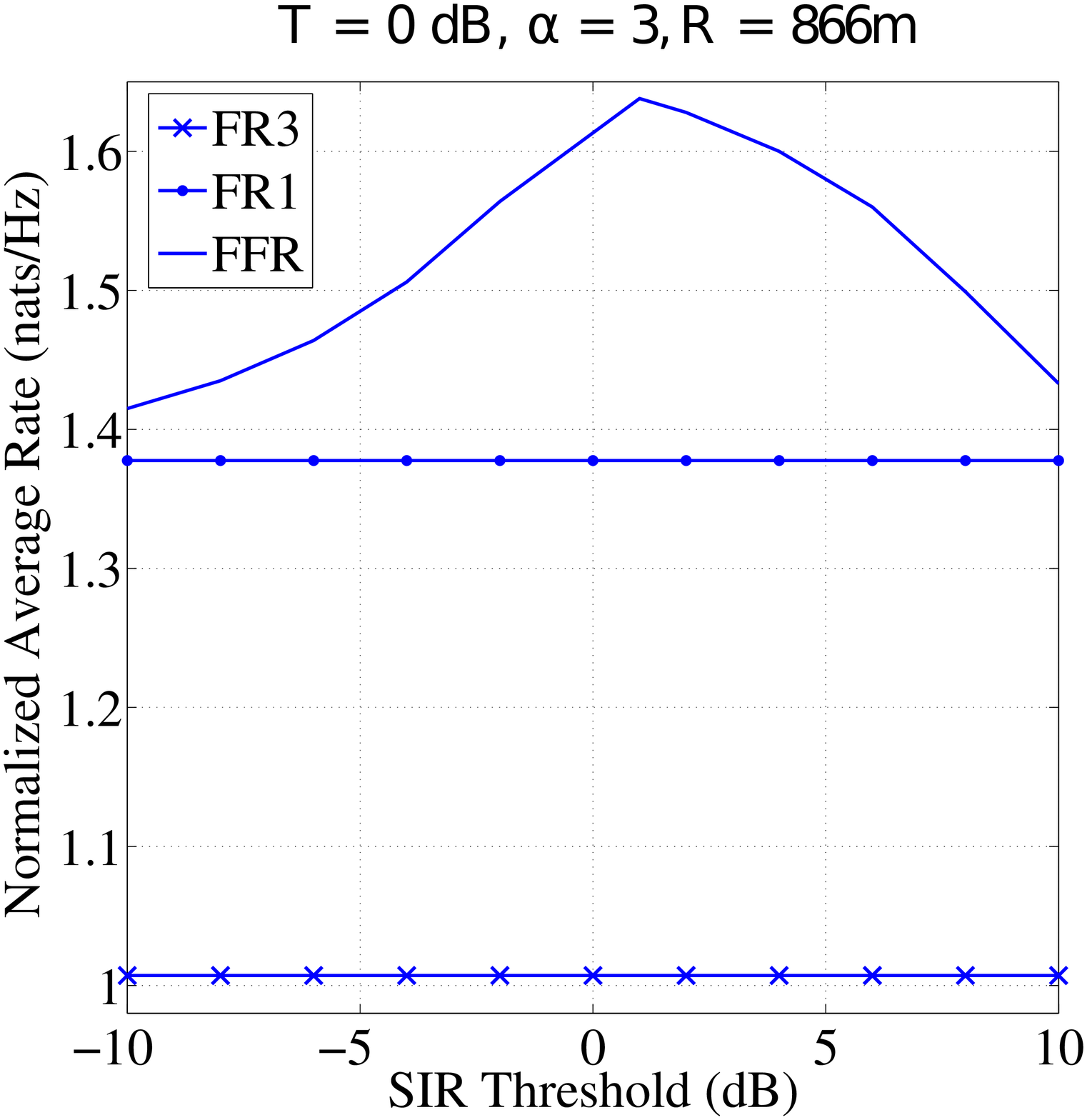}
            \label{fig:first_sub}
        }
        \subfigure[Maximum rate at $S_{th}=1$dB]
        {
            \includegraphics[width=2in]{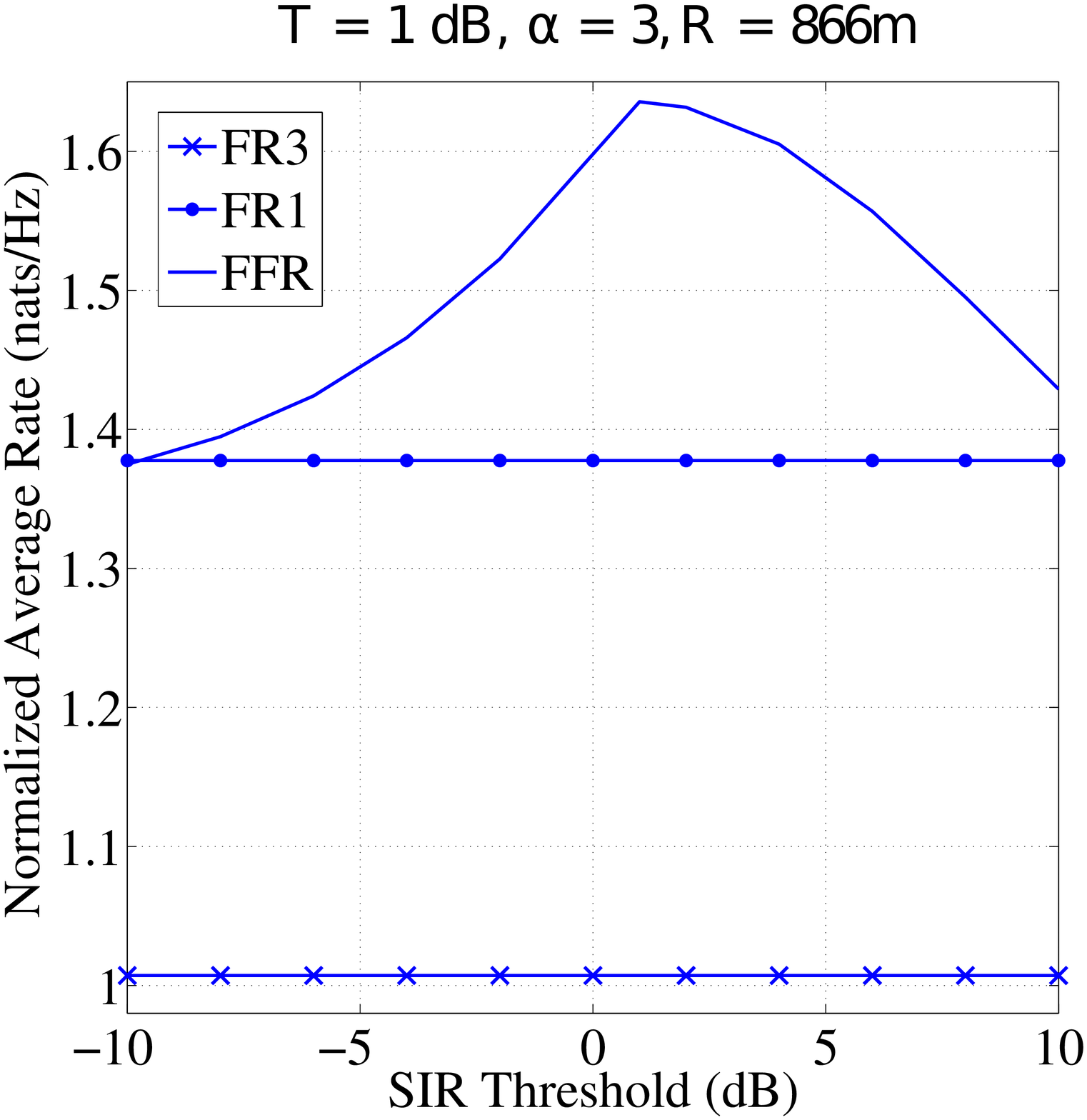}
            \label{fig:second_sub}
        }
        \subfigure[Maximum rate at $S_{th}=2$dB]
        {
            \includegraphics[width=2in]{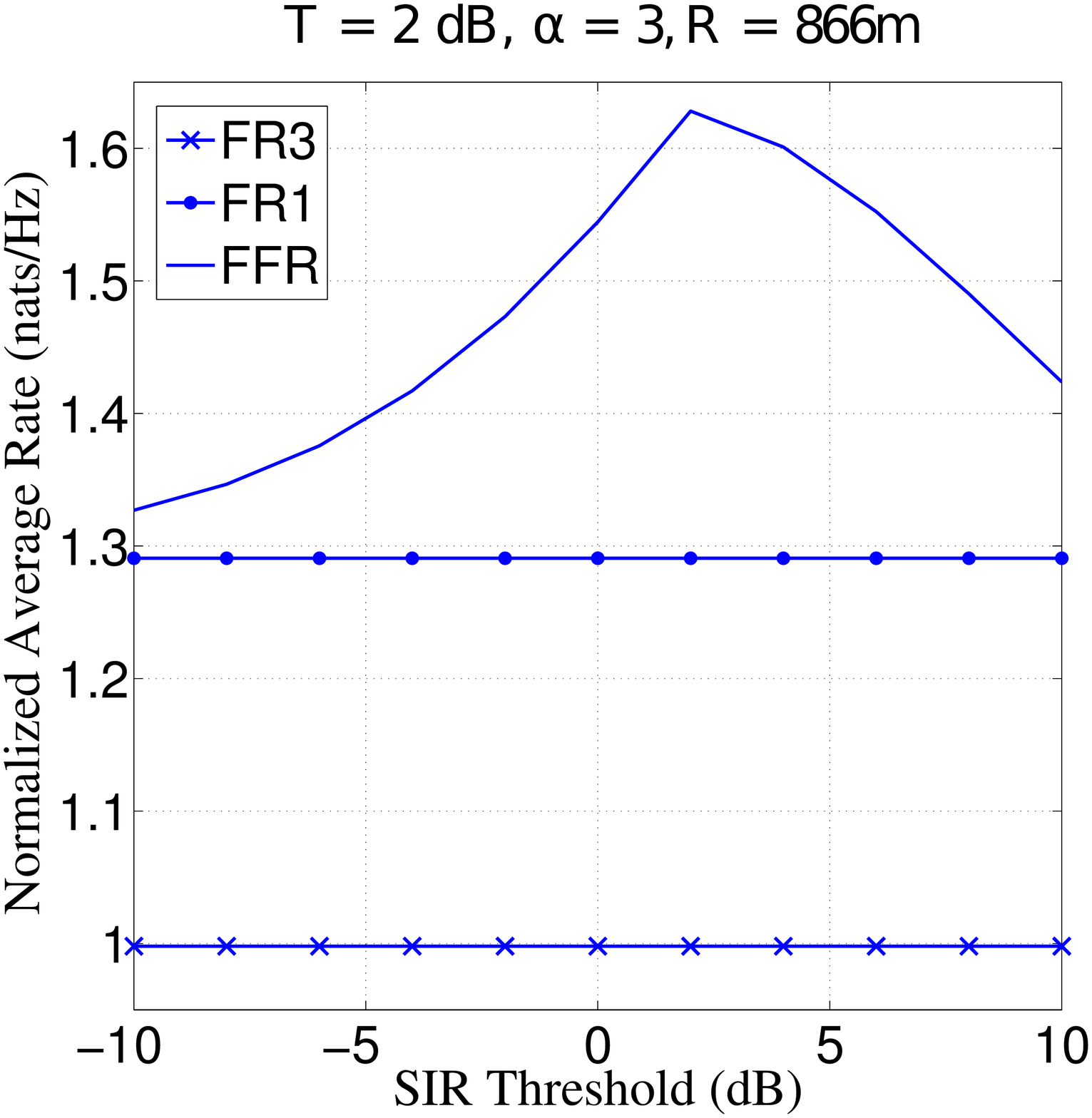}
            \label{fig:third_sub}
        }
        \caption{Normalized spectral efficiency of FR$1$, FR$3$, and FFR with respect to SIR threshold $S_{th}$ for three different values of Target SIR $T$ when fading are independent across the sub-bands.}
  \label{fig:ffr_rate}
    \end{figure*}
 \begin{figure}[ht]
 \centering
 \includegraphics[scale=0.35]{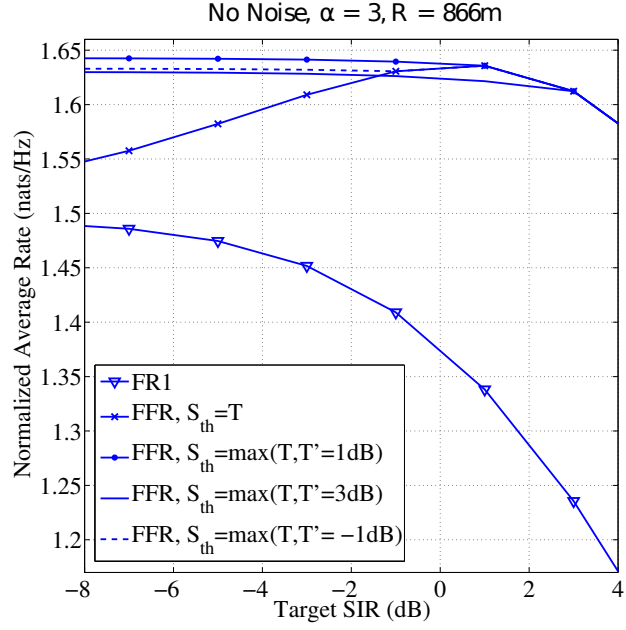}
 \caption{Normalized spectral efficiency of FR$1$ and FFR with respect to target SIR $T$ when fading are independent across the sub-band.}
 \label{fig:fig4}
 \end{figure}

Another numerical result is also   presented in Fig. \ref{fig:fig4}, where the normalized average rate as a function of SIR  target  $T$ is plotted  for three different choices of $T'$ namely $-1$dB, $1$dB and $3$dB, and  $S_{th}$ is chosen as $\max(T,T')$. A fourth choice for $S_{th}$, namely $S_{th}=T$ is also shown in Fig. \ref{fig:fig4} for comparison. From these curves, which are numerically evaluated using \eqref{eq:ffr_rat}, it seems that $T'=1$dB provides the best rate performance for the considered FFR based cellular network. A lower value of $T'$ may push too many users from cell-edge to cell-centre resources, while a higher value of $T'$ may increase the number of cell-edge users. In either case, the  normalized average rate (which taken into account the rates of all the users in the network including those with zero rate) will be lower. 
\subsection{Normalized Average Rate of FFR System When the Sub-bands are Completely Correlated}
In this subsection first we derive the normalized average rate of FFR system. The normalized average rate  in FFR system $R_f(r)$ given in \eqref{eq:ffr_rate} can be rewritten as
\begin{eqnarray}
R_f(r)=R_c(r)P[\eta(r)>S_{th}]+
\frac{1}{3}R_e(r)P[\eta(r)<S_{th}].
\label{eq:ffr_rate1}
\end{eqnarray}
Note that the first term $R_c(r)P[\eta(r)>S_{th}]$ denotes the normalized average rate  contributed by cell-centre users  and it is same for both the cases (fading are correlated or independent across the sub-bands). 
Now, using the expression of $R_e(r)$ given in \eqref{eq:edge_rate4}, $R_e(r)P[\eta(r)<S_{th}]$  can be written as
\begin{equation}
R_e(r)P[\eta(r)<S_{th}]=\underset{t>0}\int P[\hat{\eta}(r)>\max\{e^t-1,T\},\eta(r)<S_{th}]\text{dt} \label{eq:edge_rate}.
\end{equation}
Using the transformation given in \eqref{eq:tran}, $R_e(r)P[\eta(r)<S_{th}]$ can be  simplified as
\begin{equation}
R_e(r)P[\eta(r)<S_{th}]=\underset{t>0}\int P[\hat{\eta}(r)>\max\{e^t-1,T\}]-P[\hat{\eta}(r)>\max\{e^t-1,T,\hat{S}_{th}\}]\text{dt} \label{eq:edge_rate1}. 
\end{equation}
Using the result given in \eqref{fr3}, $R_e(r)P[\eta(r)<S_{th}]$ can be further  simplified as
\begin{equation}
R_e(r)P[\eta(r)<S_{th}]=\underset{t>0}\int \underset{i \in \phi}\prod\frac{1}{1+\text{max}\{e^t-1,T\}r^\alpha d_i^{-\alpha}}-\underset{i \in \phi}\prod\frac{1}{1+\text{max}\{e^t-1,T,\hat{S}_{th}\}r^\alpha d_i^{-\alpha}}\text{dt} \label{approx_rate2}. 
\end{equation}
Finally putting back   \eqref{eq:centre_rate1} and \eqref{approx_rate2} into \eqref{eq:ffr_rate1} and after averaging over the spatial dimension, 
the normalized average rate in FFR system is given by
\begin{equation}
\scriptstyle
R_f = \int\limits_{0}^{R}\underset{t>0}\int\underset{j \in \psi}\prod\frac{1}{1+\max\{e^t-1,T,S_{th}\} r^\alpha d_j^{-\alpha}}+\frac{1}{3} \left(\underset{i \in \phi}\prod\frac{1}{1+\text{max}\{e^t-1,T\}r^\alpha d_i^{-\alpha}}-\underset{i \in \phi}\prod\frac{1}{1+\text{max}\{e^t-1,T,\hat{S}_{th}\}r^\alpha d_i^{-\alpha}}\right)\mathrm{d}tf_R(r)\mathrm{d}r \label{eq:ffr_rat2}.
\end{equation}
 \begin{figure*}
    
        \subfigure[Maximum rate at $S_{th}=1$dB]
        {
            \includegraphics[width=2in]{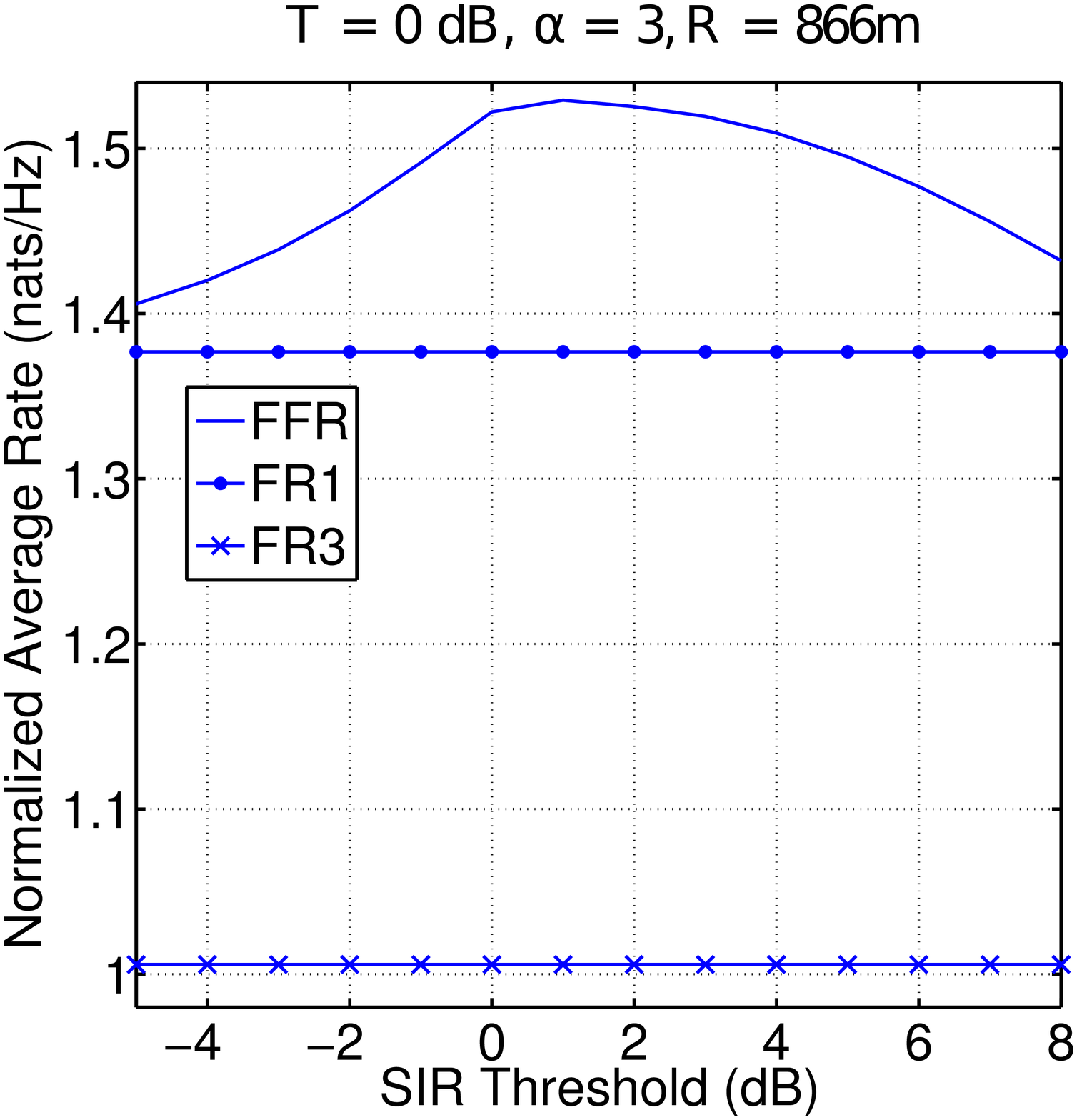}
            \label{fig:first_sub1}
        }
        \subfigure[Maximum rate at $S_{th}=1$dB]
        {
            \includegraphics[width=2in]{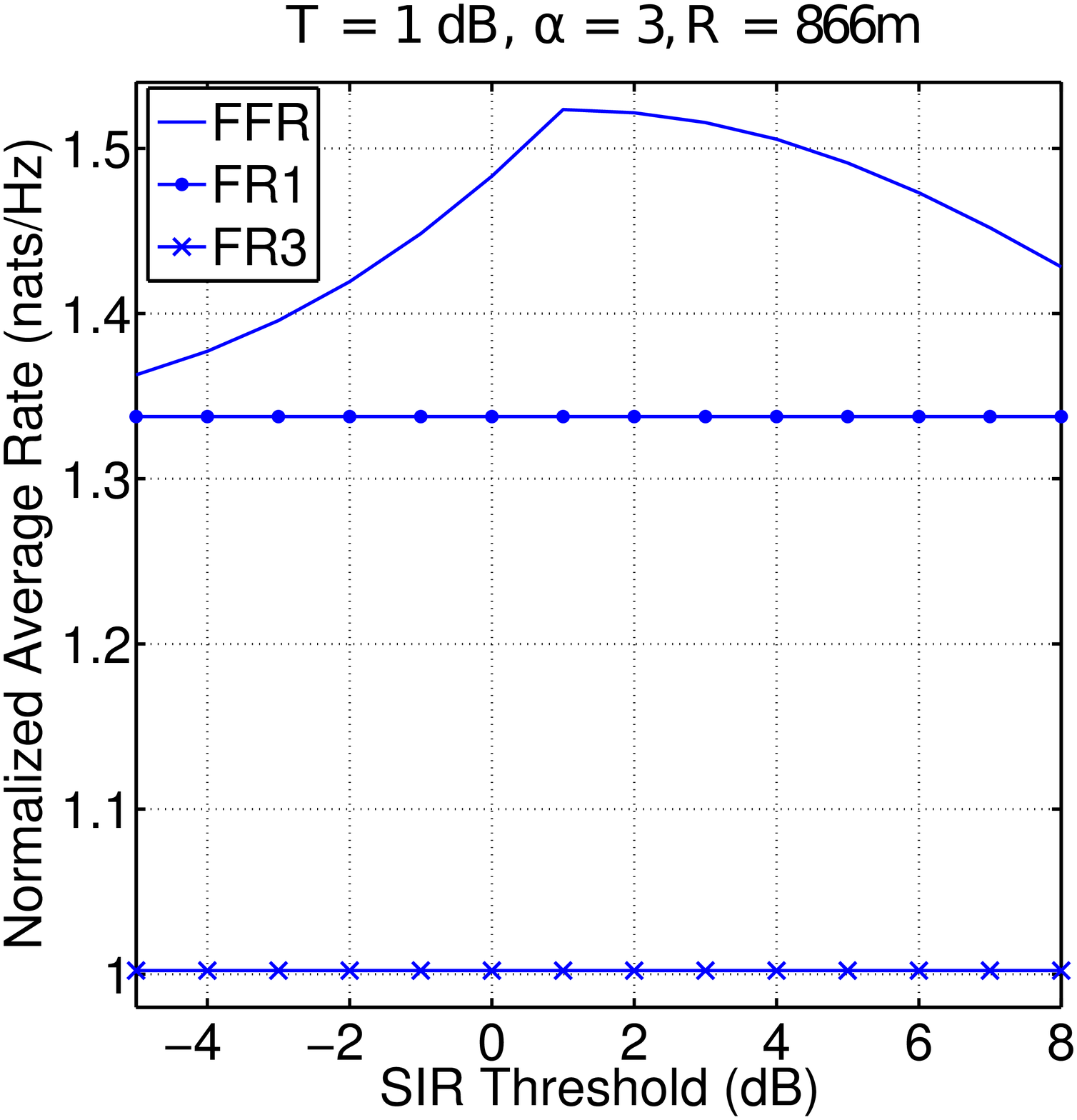}
            \label{fig:second_sub1}
        }
        \subfigure[Maximum rate at $S_{th}=2$dB]
        {
            \includegraphics[width=2in]{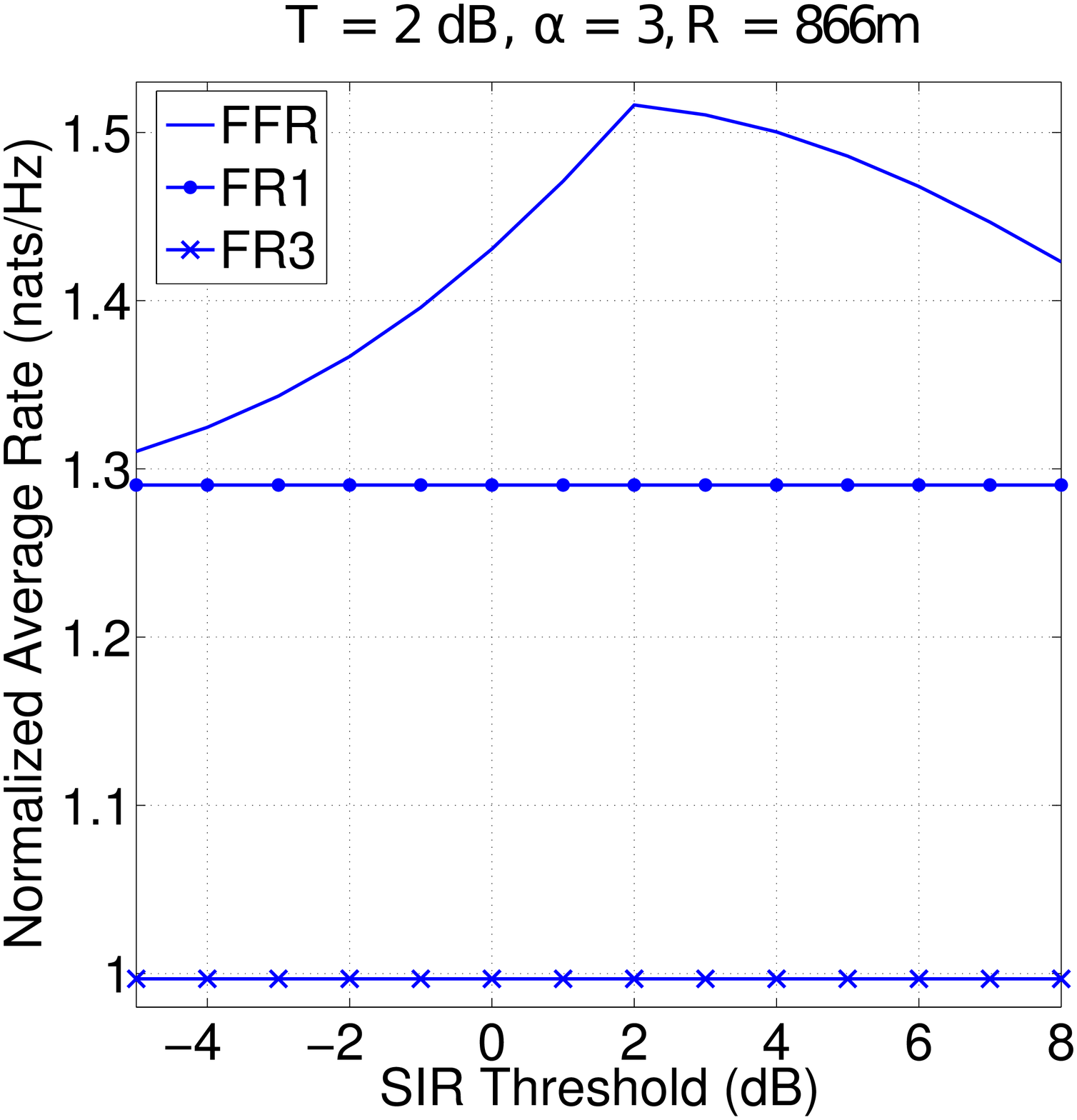}
            \label{fig:third_sub1}
        }
        \caption{Normalized spectral efficiency of FR$1$, FR$3$, and FFR with respect to SIR threshold $S_{th}$ for three different values of Target SIR $T$ when fading are fully correlated across the sub-band.}
  \label{fig:ffr_rate1}
    \end{figure*}
We numerically evaluate the expression  in \eqref{eq:ffr_rat2}.  Fig. \ref{fig:ffr_rate} plots the normalized average rate of FR$1$, FR$3$, and FFR with respect to $S_{th}$ for three values of $T$ namely, $0$dB, $1$dB, and $2$dB. We see that for $T=1$dB and $2$dB, using $S_{th}=T$ gives the maximum normalized average rate. However, for $T=0$dB, $S_{th}=1$dB gives the maximum normalized average rate. In other words, there exists a  SIR $T''$, such that for $T\geq T''$, $S_{th}=T$ performs better than $S_{th}>T$. Also, for $T<T''$, $S_{th}=T''$ gives the maximum normalized rate. Therefore,  $S_{th}=\max(T, T'')$ gives the maximum normalized average rate when fading channel gains are fully correlated across the sub-bands.

Table I shows the variation in $T'$, $T''$, percentage of users classified as cell-centre users and percentage gain in normalized average rate of FFR when compared with normalized average rate of FR$1$ as a function of the path loss exponent. Here to obtain percentage gain in normalized average rate we assume $T=T'$ for case $(i)$ and $T=T''$ for case $(ii)$,  and percentage of cell-centre users is nothing but $P[\text{SIR}>S_{opt,R}]$. First, we note that  $T'\approx T''$, however the normalized average rate achieved at $S_{opt,R}$ using independent fading is higher than the case when fading are fully correlated.  Secondly, it can be observed that as $\alpha$ increases both $T'$ and $T''$  and also percentage of cell-centre users increases.  However, percentage  gain in normalized average rate decreases as path loss exponent increases and percentage gain in normalized rate is higher when fading are independent than the case when fading are fully correlated.
\begin{table}[ht]
        \centering
\caption{Variation in $T'$, $T''$, percentage of cell-centre users and percentage gain in normalized average rate with respect to path loss exponent for both the cases: case $(i)$ when fading are independent. and case $(ii)$ when fading are fully correlated.} 
\renewcommand{\tabcolsep}{0.35cm}
\renewcommand{\arraystretch}{1.5}
\begin{tabular}{|c| c| c| c|c |c| }
\hline 
$\alpha$  & {\bf $T'$} for case $(i)$&{\bf $T''$}for case $(ii)$ &{ $\%$ Cell-centre users} & $\%$ gain in case $(i)$ & $\%$ gain in case $(ii)$  \tabularnewline
\hline
\hline 
2 & $-2.3$dB& $-2.5$dB &50 $\%$ &31.6 $\%$ &16.65  $\%$\tabularnewline
\hline 
2.5& $-0.5$dB & $-0.6$dB & 53$\%$ &26.2  $\%$ &15.2  $\%$\tabularnewline
\hline 
3 &  $1$dB &  $1$dB& 56 $\%$ &22.2  $\%$ &13.9  $\%$\tabularnewline
\hline 
3.5&  $2.3$dB &  $2.3$dB& 59 $\%$ &19.4  $\%$ & 13  $\%$\tabularnewline
\hline 
4  &  $3.5$dB &  $3.5$dB & 62 $\%$ &17.5 $\%$ & 12.4  $\%$\tabularnewline
\hline 
\end{tabular}
\end{table} 

Finally, we have two different expressions for optimal SINR threshold for both the cases, one corresponding to coverage probability ($S_{th}= T $) and other corresponding to normalized average rate ($S_{th}=\max(T,T')$). To maximize  both coverage probability as well as  normalized average rate simultaneously, the system designer may choose  either one of these two expressions.
However, if we choose $S_{th}=T$, then the  normalized average rate decreases significantly as shown in Fig. \ref{fig:fig4}, on the other hand when we choose $S_{th}=\max(T,T')$, normalized average rate is maximized and the loss in coverage probability over choosing $S_{th}=T$ is negligible. Therefore,  one could choose $S_{th}=\max(T,T')$ to maximize  both coverage probability as well as the normalized average rate.
\section{Conclusion}
This work has derived expressions for coverage probability and normalized average rate for OFDMA system utilizing planned FFR deployment. The impact of frequency correlation between the sub-bands allocated to FR$1$ and FR$3$ regions on the average rate and the coverage probability is analysed in detail, since any practical OFDMA system will typically see some correlation. We analytically obtained the  optimal SINR threshold which maximizes the coverage probability, and also determined the  optimal SINR threshold which maximizes the normalized average rate for the following cases: $(i)$ sub-bands are uncorrelated and $(ii)$ sub-bands are completely correlated. Further, it is shown that for the optimal choice of SINR threshold, coverage probability for the FFR is higher than the FR$3$ coverage probability, and the normalized average rate for FFR is also significantly higher  than the rates achieved by either FR$1$ or FR$3$.
\bibliographystyle{IEEEtran}
\bibliography{bibfile}

\end{document}